\documentclass[12pt, a4paper]{amsart}

\usepackage{amsmath,amssymb,amsfonts}
\usepackage{bbold}
\usepackage{epic,eepic}
\usepackage{enumerate}

\usepackage{amsthm}
\usepackage[all]{xy}
\usepackage{caption}
\usepackage{stmaryrd} 
\usepackage[dvipdfmx]{graphicx,color} 
\usepackage{tikz}
\usetikzlibrary{trees}

\usepackage{geometry}
\geometry{
	a4paper,
	total={170mm,257mm},
	left=20mm,
	tmargin=28mm,
	bmargin=25mm,
}

\usepackage{listings} 

\theoremstyle{plain}
\newtheorem{thm}{Theorem}[section]
\newtheorem{prop}[thm]{Proposition}
\newtheorem{lem}[thm]{Lemma}

\theoremstyle{definition}
\newtheorem{defn}[thm]{Definition}
\newtheorem{exmp}[thm]{Example}

\newtheorem{rem}[thm]{Remark}
\newtheorem{asm}[thm]{Assumption}

\numberwithin{figure}{section}

\numberwithin{table}{section}

\newcommand{\lspace} {
  \vspace{0.8\baselineskip}
}

\newcommand{\abs}[1]{
  \lvert  #1 \rvert
}


\newdir{ >}{{}*!/-5pt/@{>}}

\definecolor{arrowred}{rgb}{0,0,0} 
\newcommand{\newword}[1]{\textbf{\textit{#1}}}

\newcommand{\textred}[1]{#1}
\newcommand{\textblue}[1]{#1}

\numberwithin{equation}{section}

\newcommand{\ProbB}{\bar{B}}

\newcommand{\ProbO}{\bar{\Omega}}

\newcommand{\Prob}{\mathbf{Prob}}


\mathchardef\mhyphen="2D  

\title {Generalized Filtrations and Its Application to Binomial Asset Pricing Models}

\thanks{This work was supported by JSPS KAKENHI Grant Number 18K01551.}

\author[T. Adachi, K. Nakajima and Y. Ryu]{Takanori Adachi, Katsushi Nakajima and Yoshihiro Ryu}
\address{Graduate School of Management,
         Tokyo Metropolitan University,
         1-4-1 Marunouchi, Chiyoda-ku, Tokyo 100-0005, Japan}
\email{Takanori Adachi <tadachi@tmu.ac.jp>}

\address{College of International Management,
         Ritsumeikan Asia Pacific University,
         1-1 Jumonjibaru, Beppu, Oita, 874-8577 Japan}
\email{Katsushi Nakajima <knakaji@apu.ac.jp>}

\address{Department of Mathematical Sciences,
         Ritsumeikan University,
         1-1-1 Nojihigashi, Kusatsu, Shiga, 525-8577 Japan}
\email{Yoshihiro Ryu <iti2san@gmail.com>}

\date{\today}

\keywords{
	binomial asset pricing model,
	categorical probability theory,
	generalized filtration
}


\subjclass[2010]{
  Primary 
    91B25,   
    16B50;   
  secondary
    60G20,   
	91Gxx	
}


\begin{document}

\maketitle

\begin{abstract}
We introduce generalized filtration with which we can represent situations
such as
some agents forget information at some specific time. 
The filtration is defined as a functor to a category
$\Prob$
whose objects are
all probability spaces
and whose arrows correspond to measurable functions
 satisfying an absolutely continuous requirement
\cite{AR_2019}.
As an application of a generalized filtration,
we develop a binomial asset pricing model,
and
investigate the
valuations of financial claims along this type of non-standard filtrations.
\end{abstract}



\section{Introduction}
\label{sec:intro}

It is well known that 
in stochastic process theory 
and theories 
developed on it
such as 
stochastic differential equation theory
and stochastic control theory,
the concept of 
\newword{filtration}
that expresses increasing information along time is important.
The idea that the world's information grows over time seems to be quite natural, 
but in a sense it is a divine perspective of omniscience and almighty, 
and
it would be a little different if we say that
 the amount of information that an individual has always increases with time. 
People forget and misunderstand.
The transition of such individuals' information may therefore be reduced, 
and may be remembered as a different form of experience than objective information.
The purpose of the first half of this paper is to propose a kind of 
\newword{subjective filtration}
that expresses the transition of such information.

In this way, 
we generalize the concept of filtration so that we can handle subjective situations, 
but the purpose of generalized filtration is not limited to that.
For example, 
consider a situation in which Black Swan, who no one had imagined up to a certain point in time, was falling.
The financial crisis that hit the world in 2008
and 
the COVID-19 pandemic
in 2020 are typical examples.
When Black Swan suddenly appeared, which was not included among the possible future world lines, 
we could not give a probability for that event and we were greatly upset.
Of course, God could have a sufficiently large set of primitive events to take into account such possibilities,
it would have been possible to give a probability to an event that ordinary people did not expect.
But can such an idealized perspective really create a theory that averts the risk of Black Swan?

The generalized filtration formulated in this paper allows even the underlying set of probability space, 
which is the set of primitive events, to change over time.
And it allows the sudden appearance of Black Swan to be incorporated into the theory in a natural way.

In the second half of this paper, 
we consider two types of filtration on the binomial asset price model as an application of generalized filtration.
In particular, 
we show that there is a risk-neutral filtration associated with subjective filtration 
that a person who has lost memory for a certain period of time, 
and use it to price securities.
This indicates that people with a lack of memory can price securities.


Finally, in summary, 
other applications of generalized filtration and future development directions are described.

\section{Generalized Filtrations}
\label{sec:genfil}

In this section, 
we define generalized filtration by gradually extending the classical filtration.

\subsection{Time Domains}
\label{timeDom}
A filtration represents a set of information that increases with time. 
The set of times here is called a 
\newword{time domain}
and is represented by
$\mathcal{T}$.
Typical
$\mathcal{T}$
has the following forms.
\begin{enumerate}
\item
$
\textred{\mathcal{T}}
	:=
\{
	0, 1, 2, \dots, T
\},
$

\item
$
\textred{\mathcal{T}}
	:=
\{
	0, 1, 2, \dots 
\},
$

\item
$
\textred{\mathcal{T}}
	:=
[0, T],
$

\item
$
\textred{\mathcal{T}}
	:=
[0, +\infty) ,
$
\end{enumerate}
where
$T$
is a time horizon.
In general
time domain may be a 
\newword{totally ordered set}
having the minimum element
$0$.

\subsection{Classical Filtrations}
\label{claFil}

Let
$
\textred{
\ProbO
}
	:= (
\Omega, \mathcal{F}, \mathbb{P}
)
$
be a probability space.
Let
$
\{ \textred{t_n} \}
$
be an increasing sequence
in a time domain
$\textblue{\mathcal{T}}$.
Then,
an increasing sequence
of
$\sigma$-fields
\begin{equation*}
\mathcal{F}_{t_0}
	\subset
\mathcal{F}_{t_1}
	\subset
\cdots
	\subset
\mathcal{F}_{t_n}
	\subset
\mathcal{F}_{t_{n+1}}
	\subset
\cdots
\end{equation*}
with
$
\mathcal{F}_{t_n}
	\subset
\mathcal{F}
$
is called a
\newword{classical filtration}.
In other words, a
\textblue{filtration}
is a family of set-inclusion relations like
\begin{equation*}
\{
\mathcal{F}_{s}
	\; \textblue{\subset} \;
\mathcal{F}_{t}
\}_{s \le t}  .
\end{equation*}
Now let
\[
\textred{
\ProbO_t
}
	:= (
\Omega, \textblue{\mathcal{F}_t}, \mathbb{P}
)
\]
be probability spaces
whose 
$\sigma$-fields are changing
per time
$t$.
Then,
for
$s \le t$
in
$\mathcal{T}$,
the condition that the function below defined as an identity function
$i_{s,t}$
is measurable 
is equivalent to the condition
$
\mathcal{F}_s
	\subset
\mathcal{F}_t
$
\[
\xymatrix@C=40 pt@R=25 pt{
	\ProbO_s
		\ar @{}_{ \mathrel{ \rotatebox[origin=c]{90}{$\in$} } } @<+2pt> [d]
&
	\ProbO_t
		\ar @{->}_{\textred{i_{s,t}}} [l]
		\ar @{}_{ \mathrel{ \rotatebox[origin=c]{90}{$\in$} } } @<+2pt> [d]
\\
	\omega
&
	\omega
		\ar @{|->} [l]
} .
\]
In other words,
the filtration can be identified with a family of measurable functions
\[
\xymatrix@C=30 pt@R=30 pt{
	\{ \ProbO_s
&
	\ProbO_t \}_{s \le t}
		\ar @{->}_{\textblue{i_{s,t}}} [l]
}.
\]
Therefore, in the following, 
instead of using the 
$\sigma$-field
$\mathcal{F}_t$,
filtration will be considered as a family of measurable functions.

\subsection{Generalization of Filtrations}
\label{sec:genFil}

As we see in the previous section,
a filtration can be seen as a family of
identity functions
$i_{s,t}$
as measurable functions.
Now what if we generalize them to arbitrary measurable functions like the following?
\[
\xymatrix@C=30 pt@R=30 pt{
	\{ \ProbO_s
&
	\ProbO_t \}_{s \le t}
		\ar @{->}_{\textred{f_{s,t}}} [l]
}
\]
satisfying
\begin{equation*}
f_{t,t}
	=
\textred{
Id_{\ProbO_t}
}
\quad \textrm{and} \quad
f_{s,t}
	\circ
f_{t,u}
	=
f_{s,u}
\end{equation*}
for any
$
s \le t \le u
$
in
$\mathcal{T}$,
where
$
Id_{\ProbO_t}
$
is an identity function on
$\ProbO_t$.
However,
this definition is too general
for a random variable
$
X
	:
\ProbO_t
	\to
\mathbb{R}
$
to define
its conditional expectation
$
\textblue{E^{f_{s,t}}(X)}
	:
\ProbO_s
	\to
\mathbb{R}
$
satisfying
\begin{equation*}
\int_A
	\textblue{E^{f_{s,t}}(X)}
d \mathbb{P}
	=
\int_{f_{s,t}^{-1}(A)}
	X
d \mathbb{P} 
\quad
(\forall A
	\in
\mathcal{F}_s
). 
\end{equation*}
In order to make it possible,
we need to add an extra condition to the measurable function
$f_{s,t}$
 called
\newword{null-preserving},
that is,
for any
$
A \in \mathcal{F}_s
$,
$
\mathbb{P}(A)
	= 0
$
implies
$
\mathbb{P}(f_{s,t}^{-1}(A))
	= 0 
$
\cite{adachi_2014crm}.
In fact,
if
$f_{s,t}$
is null-preserving,
as we will see later,
we can define a conditional expectation
$
E^{f_{s,t}}(X)
	: \ProbO_s \to \mathbb{R}
$.
%
Note that when the identity function is generalized to a null-preserving function, 
the corresponding sequence of the 
$\sigma$-fields is not necessarily monotonically increasing.

In order to give a further generalization,
we consider that the probability space at each time fluctuates not only with 
the 
$\sigma$-fields
but also with probability measures and underlying sets.
In other words, 
the probability space
$
\ProbO_t
$
 at time 
$t$
is redefined as follows.
\begin{equation*}
\textred{
\ProbO_t
}
	:= (
\textred{\Omega_t}, \textblue{\mathcal{F}_t}, \textred{\mathbb{P}_t}
).
\end{equation*}
Along with this, 
the definition of null-preserving functions is extended as follows.

\begin{defn}
\label{defn:nullPres}
Let
$
\ProbO
	=
(\Omega, \mathcal{F}, \mathbb{P})
$
and
$
\ProbO'
	=
(\Omega', \mathcal{F}', \mathbb{P}')
$
be two probability spaces
and
$
f : 
\ProbO
	\to
\ProbO'
$
be a measurable functions between them.
Then 
$f$
is called
\newword{null-preserving}
if
\begin{equation*}
\mathbb{P} \circ f^{-1}
	\ll
\mathbb{P}'
\quad
(\textrm{absolutely continuous}) .
\end{equation*}
\end{defn}

\begin{defn}
\label{defn:genFil}
A
\newword{generalized filtration}
is a family of
null-preserving functions
\[
\xymatrix@C=30 pt@R=30 pt{
	\{ \ProbO_s
&
	\ProbO_t \}_{s \le t}
		\ar @{->}_{\textred{f_{s,t}}} [l]
}
\]
satisfying
\begin{equation*}
f_{t,t}
	=
\textred{
Id_{\ProbO_t}
}
\quad \textrm{and} \quad
f_{s,t}
	\circ
f_{t,u}
	=
f_{s,u}
\end{equation*}
for
all triples
$
s \le t \le u
$
in
$\mathcal{T}$.
\end{defn}

Then, we obtain a following theorem.

\begin{thm}{\normalfont{(\cite{AR_2019})}}
For any random variable
$\textblue{X}$
on 
$\ProbO_t$
and any null-preserving function
$
f : 
\ProbO_t
	\to
\ProbO_s
$,
there exists a
random variable
$Y$
on
$\ProbO_s$
such that
for every 
$
\textblue{A}
	\in
\mathcal{F}_s
$,
\begin{equation}
\label{eq:condExp}
\int_A
	Y
d \mathbb{P}_s
	=
\int_{f^{-1}(A)}
	X
d \mathbb{P}_t .
\end{equation}
We write 
$
\textblue{E^{f}(X)}
$
for
the random variable
$Y$,
and call it a
\newword{conditional expectation}
of
$X$
along
$f$.
\end{thm}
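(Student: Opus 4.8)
The plan is to obtain $Y$ as a Radon--Nikodym derivative on the space $\ProbO_s = (\Omega_s, \mathcal{F}_s, \mathbb{P}_s)$. First I would reduce to the case $X \ge 0$: for a general (integrable) $X$ one writes $X = X^+ - X^-$, applies the non-negative case to each part, and sets $Y := Y^+ - Y^-$. For $X \ge 0$, define a set function on $\mathcal{F}_s$ by
\[
\nu(A) := \int_{f^{-1}(A)} X \, d\mathbb{P}_t \qquad (A \in \mathcal{F}_s),
\]
which is well defined since $f^{-1}(A) \in \mathcal{F}_t$ by measurability of $f$. I would then verify that $\nu$ is a measure: $\nu(\varnothing) = 0$ because $f^{-1}(\varnothing) = \varnothing$, and countable additivity follows from $f^{-1}\big(\bigsqcup_n A_n\big) = \bigsqcup_n f^{-1}(A_n)$ together with countable additivity of $B \mapsto \int_B X\, d\mathbb{P}_t$ over disjoint sets (a monotone convergence argument). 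If in addition $X \in L^1(\mathbb{P}_t)$, then $\nu(\Omega_s) = \int_{\Omega_t} X \, d\mathbb{P}_t < \infty$, so $\nu$ is finite and a fortiori $\sigma$-finite.

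The decisive step is the absolute continuity $\nu \ll \mathbb{P}_s$, and this is precisely where the null-preserving hypothesis is used. Let $A \in \mathcal{F}_s$ with $\mathbb{P}_s(A) = 0$. By Definition~\ref{defn:nullPres}, $\mathbb{P}_t \circ f^{-1} \ll \mathbb{P}_s$, hence $\mathbb{P}_t(f^{-1}(A)) = 0$; integrating the non-negative function $X$ over a $\mathbb{P}_t$-null set gives $\nu(A) = 0$. Therefore $\nu \ll \mathbb{P}_s$.

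Now I would apply the Radon--Nikodym theorem on $\ProbO_s$: there exists an $\mathcal{F}_s$-measurable function $Y = \tfrac{d\nu}{d\mathbb{P}_s} \ge 0$ ($\mathbb{P}_s$-a.e.) with $\int_A Y \, d\mathbb{P}_s = \nu(A) = \int_{f^{-1}(A)} X\, d\mathbb{P}_t$ for every $A \in \mathcal{F}_s$, which is exactly \eqref{eq:condExp}. Since $Y$ is $\mathcal{F}_s$-measurable it is a random variable on $\ProbO_s$. Undoing the reduction to $X \ge 0$ yields the general statement; as usual $Y$ is unique only up to $\mathbb{P}_s$-a.s. equality, which is consistent with calling it \emph{the} conditional expectation $E^f(X)$.

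The one point requiring care is integrability: if $X$ is not assumed $\mathbb{P}_t$-integrable, $\nu$ need not be $\sigma$-finite and the Radon--Nikodym theorem does not apply verbatim, so one should either restrict to $X \in L^1(\mathbb{P}_t)$ (the natural habitat of conditional expectation) or localize to the subset of $\Omega_s$ on which the relevant densities are finite. Beyond that the proof is routine; the only genuine limiting argument is the countable additivity of $\nu$, and everything else---measurability of $Y$ and the defining identity---is read off directly from the Radon--Nikodym conclusion.
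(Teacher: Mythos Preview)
Your proof is correct and follows essentially the same route as the paper: the paper defines $X^*(D):=\int_D X\,d\mathbb{P}_t$ on $\mathcal{F}_t$, pushes it forward to the measure $X^*\circ f^{-1}$ on $\mathcal{F}_s$ (which is exactly your $\nu$), uses the chain $X^*\circ f^{-1}\ll \mathbb{P}_t\circ f^{-1}\ll \mathbb{P}_s$ from the null-preserving hypothesis, and takes $Y$ to be the Radon--Nikodym derivative. Your version is in fact slightly more careful than the paper's, since you make the reduction to $X\ge 0$ explicit and flag the integrability caveat, both of which the paper glosses over.
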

\begin{proof}
Define a measure
$X^*$
on
$
(\Omega_t, \mathcal{F}_t)
$
as in the following diagram.
\begin{equation*}
\xymatrix@C=15 pt@R=20 pt{
&&
   D
       \ar @{|->}^{} [rr]
       \ar @{}_{ \mathrel{ \rotatebox[origin=c]{-90}{$\in$} } } @<+6pt> [d]
&&
   \textred{X^*}(D)
       \ar @{}^-{:=} @<-6pt> [r]
       \ar @{}_{ \mathrel{ \rotatebox[origin=c]{-90}{$\in$} } } @<+6pt> [d]
&
   \int_D X \, d \mathbb{P}_t
\\
   \mathcal{F}_s
       \ar @{->}^{f^{-1}} [rr]
       \ar @/_2pc/_{\mathbb{P}_s} [rrrr]
&&
   \mathcal{F}_t
       \ar @{->}^{\textred{X^*}} @<+2pt> [rr]
       \ar @{->}_{\mathbb{P}_t} @<-2pt> [rr]
&&
   \mathbb{R}   
}
\end{equation*}
Then,
since
$
X^*
	\ll 
\mathbb{P}_t
$
and
$f$
is null-preserving,
we have
\begin{equation*}
X^* \circ f^{-1}
  \ll
\mathbb{P}_t \circ f^{-1}
  \ll
\mathbb{P}_s .
\end{equation*}
Therefore,
we get a following
Radon-Nikodym derivative.
\begin{equation*}
Y
  :=
  \partial
  ( X^* \circ f^{-1} )
/
  \partial
  \mathbb{P}_s .
\end{equation*}
With this
$Y$
we obtain for every
$A \in \mathcal{F}_s$,
\begin{align*}
	&
\textblue{
 \int_A Y \, d \mathbb{P}_s
}
	=
\int_A \,
d (X^* \circ f^{-1})
	=
(X^* \circ f^{-1})(A)
 =
X^* ( f^{-1}(A))
	=
\textblue{
 \int_{f^{-1}(A)} X \, d \mathbb{P}_t 
} .
\end{align*}

\end{proof}

Henceforth, 
generalized filtration will be referred to simply as filtration.

\subsection{Filtration is a Functor}
\label{sec:filfunc}

In this subsection, 
we will try to redefine the filtration introduced in 
Section \ref{sec:genFil}
using Category Theory
\cite{maclane1997}.

\begin{defn}{[Two Categories $\Prob$ and $\mathcal{T}$]}
\label{defn:ProbAndT}

\begin{enumerate}
\item
All probability spaces and all null-preserving functions between them form a
\newword{category}.
This category is denoted by
$\textred{\Prob}$
\footnote{
For further discussion about the category
$\textblue{\Prob}$,
see
\cite{AR_2019}.
}.

\item
A time domain
$\textblue{\mathcal{T}}$
can be regarded as a category
if we consider its elements as 
\newword{objects},
and
if two objects 
$s$
and
$t$
have one and only one
\newword{arrow}
from 
$t$
to
$s$
when there is a relation
$
s \le t
$.

\end{enumerate}

\end{defn}


Then,
the filtration introduced in
Section \ref{sec:genFil}
can be regarded as a
\newword{functor}
$
F : \mathcal{T} \to \Prob
$
(Figure\ref{fig:filfun}).
We sometimes call
$F$
a 
\newword{$\mathcal{T}$-filtration}
in order for clarifying its time domain.

\begin{figure*}[t]
\center
\[
\xymatrix@C=30 pt@R=30 pt{
   \mathcal{T}
      \ar @{->}^{\textred{F}} [rrrr]
&&&&
   \Prob
\\
   s
      \ar @(dl,ul)[]^{Id_s}
&&&&
	F(s)
		\ar @(dl,ul)[]^{
			\textblue{
			F(Id_s) = Id_{F(s)}
			}
		}
		\ar @{}^-{:=} @<-6pt> [r]
&
	\textred{\ProbO_s}
\\
   t
      \ar @{->}^{\iota^N_{s,t}} [u]
      \ar @(dl,ul)[]^{Id_t}
&&&&
   F(t)
      \ar @{->}^{\textred{f_{s,t}} =: F(\iota^N_{s,t})} [u]
      \ar @(dl,ul)[]^{
        \textblue{
         F(Id_t) = Id_{F(t)}
        }
      }
\\
   u
      \ar @{->}^{\iota^N_{t,u}} [u]
      \ar @(dl,ul)[]^{Id_u}
      \ar @/_2pc/_{
         \iota^N_{s,t} \circ \iota^N_{t,u}
      } [uu]
&&&&
	F(u)
		\ar @{->}^{\textred{f_{t,u}} =: F(\iota^N_{t,u})} [u]
		\ar @(dl,ul)[]^{
			\textblue{
			F(Id_u) = Id_{F(u)}
			}
		}
		\ar @/_2pc/_{
			\textblue{
			F(\iota^N_{s,t} \circ \iota^N_{t,u}) = F(\iota^N_{s,t}) \circ F(\iota^N_{t,u}) = \textred{f_{s,u}}
			}
		} [uu]
		\ar @{}^-{:=} @<-6pt> [r]
&
	\textred{\ProbO_u}
}
\]
\caption{Filtration
$
F : \mathcal{T} \to \Prob
$}
\label{fig:filfun}
\end{figure*}

\begin{figure}[h]
\center
\textblue{Classical Filtration}
\begin{equation*}
\xymatrix@C=25 pt@R=15 pt{
\mathcal{T}
&
	t_0
		\ar @{}^-{\le} @<-6pt> [r]
&
	t_1
		\ar @{}^-{\le} @<-6pt> [r]
&
	t_2
		\ar @{}^-{\le} @<-6pt> [r]
&
	\cdots
\\
	\textred{\mathbb{F}}
&
	\textred{\mathcal{F}_{t_0}}
		\ar @{}^-{\subset} @<-6pt> [r]
&
	\textred{\mathcal{F}_{t_1}}
		\ar @{}^-{\subset} @<-6pt> [r]
&
	\textred{\mathcal{F}_{t_2}}
		\ar @{}^-{\subset} @<-6pt> [r]
&
	\cdots
\\
&
	\Omega
&
	\Omega
		\ar @{->}_{Id_{\Omega}} [l]
&
	\Omega
       \ar @{->}_{Id_{\Omega}} [l]
&
	\cdots
       \ar @{->}_{Id_{\Omega}} [l]
}
\end{equation*}


\textblue{Generalized Filtration}
\begin{equation*}
\xymatrix@C=25 pt@R=15 pt{
\mathcal{T}
       \ar @{->}_{\textred{F}} [d]
&
	t_0
&
	t_1
		\ar @{->} [l]
&
	t_2
		\ar @{->} [l]
&
	\cdots
		\ar @{->} [l]
\\
\textred{\Prob}
&
	\ProbO_{t_0}
&
	\ProbO_{t_1}
		\ar @{->}_{\textred{f_{t_0,t_1}}} [l]
&
	\ProbO_{t_2}
		\ar @{->}_{\textred{f_{t_1,t_2}}} [l]
&
	\cdots
		\ar @{->}_{\textred{f_{t_2,t_3}}} [l]
}
\end{equation*}
\caption{Classical and Generalized Filtrations}
\label{fig:distfil}

\end{figure}
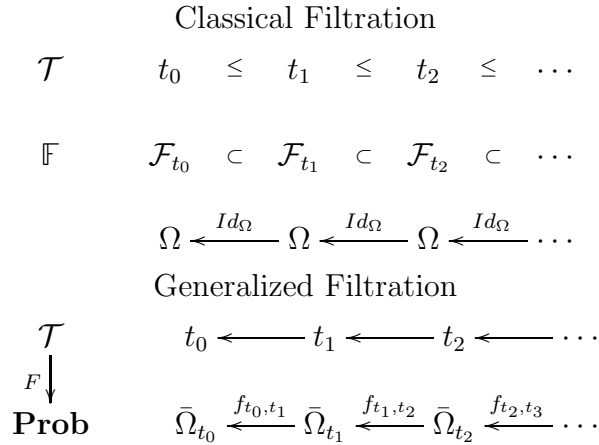

\section{Filtrations over a Binomial Asset Pricing Model}
\label{sec:BAPM}

In this section, 
as a concrete example of the filtration introduced in 
Section \ref{sec:genfil}, 
we look at an unusual filtration on a binomial asset pricing model.

\subsection{Filtration $\mathcal{B}^N$}
\label{sec:BMsetting}

First, we define a general scheme of our model by introducing a filtration
$\mathcal{B}^N$ 
for an integer
$N$.

\begin{defn}[{Time Domain and Probability Space}]
Let
$\textred{N} \in \mathbb{N}$
and
$
s, t \in \mathbb{R}
$
be non-negative real numbers.

\begin{enumerate}
\item
\newword{Discrete intervals}.
\begin{align*}
\textred{[s, t]^N}
	&:= \{
	n 2^{-N}
\mid
	n \in \mathbb{Z}
\; \mathrm{and} \;
	s \le n 2^{-N} \textblue{\le} t
\} ,
	\\
\textred{[s, t)^N}
	&:= \{
	n 2^{-N}
\mid
	n \in \mathbb{Z}
\; \mathrm{and} \;
	s \le n 2^{-N} \textblue{<} t
\},
	\\
\textred{(s, t]^N}
	&:= \{
	n 2^{-N}
\mid
	n \in \mathbb{Z}
\; \mathrm{and} \;
	s < n 2^{-N} \textblue{\le} t
\} ,
	\\
\textred{(s, t)^N}
	&:= \{
	n 2^{-N}
\mid
	n \in \mathbb{Z}
\; \mathrm{and} \;
	s < n 2^{-N} \textblue{<} t
\}.
\end{align*}

\item
Let
$\textred{\mathcal{T}^N}$
be a category
whose objects are elements of 
$[0, \infty)^N$.
For
$s, t \in [0, \infty)^N$,
$\textred{\mathcal{T}^N}$
has the unique arrow
$\textred{\iota^N_{s,t}}$
from $t$ to $s$
if and only if
$t \ge s$.

\item
$
\textred{B^N_t}
 :=
\{0,1\}^{(0, t]^N}
$.
\quad
(function space) 

\item
$
\textred{
\mathcal{F}^N_t
}
	:=
2^{B^N_t}
$.
\quad
(powerset)

\item
Let
$
\textred{p^N_s} 
	\in
[0,1]
$
for each
$
s \in (0, \infty)^N
$.
Then, a probability measure
$
\textred{
	\mathbb{P}^N_t
}
	:
\mathcal{F}^N_t
	\to
[0,1]
$
is defined 
for every
$\omega \in B^N_t$
by
\begin{equation*}
	\textred{\mathbb{P}^N_t}
(\{\omega\})
	:=
\prod_{s \in (0,t]^N}
	(p^N_s)^{\omega(s)}
	(1 - p^N_s)^{1 - \omega(s)} .
\end{equation*}

\item
$
\textred{\ProbB}^N_t
	:=
(B^N_t, \mathcal{F}^N_t, \mathbb{P}^N_t)
$
\quad
(probability space)

\end{enumerate}

\end{defn}

\begin{defn}
\label{defn:filB}
A filtration
$\textred{\mathcal{B}^N}$
is determined by defining arrows
$\textred{f^N_{s,t}}$
below:
\begin{equation*}
\xymatrix@C=40 pt@R=8 pt{
	\mathcal{T}^N
		\ar @{->}^{\textred{\mathcal{B}^N}} [r]
&
	\Prob
\\
	s
&
	\ProbB^N_s
\\\\
	t
		\ar @{->}^{\iota^N_{s,t}} [uu]
&
	\ProbB^N_t
		\ar @{->}_{
			\mathcal{B}^N(\iota^N_{s,t}) := f^N_{s,t} 
		} [uu]
}
\end{equation*}

The filtration
$\mathcal{B}^N$
is called
\newword{non-trivial}
if there exists
$
t \in (0, \infty)^N
$
such that
$
0 < p_t < 1
$.

\end{defn}

Note that
for a non-trivial filtration
$\mathcal{B}^N$,
every function
from
$B^N_t$
to
$B^N_s$
becomes a 
null-preserving function
from
$\ProbB^N_t$
to
$\ProbB^N_s$.

\lspace

As we introduced,
the functor
$
\textred{
	\mathcal{B}^N
}
$
is a
generalized filtration,
representing a filtration over
the classical binomial model
 developed,
for example
 in
\cite{shreve_I}.

The classical version requires the terminal time horizon
$\textred{T}$
for determining the underlying set
$
\textred{\Omega}
	:=
\{0, 1\}^T
$
while our version does not require it
since the time variant probability spaces can evolve without any limit.
That is, 
\textblue{
our version allows unknown future elementary events,
which, we believe, shows a big philosophical difference from the traditional Kolmogorov world.
}

\begin{prop}
\label{prop:BMcondE}
For
a random variable
$X$
on
$\bar{B}^N_t$
and
$
\textred{\omega}
	\in
\bar{B}^N_s
$,
we have
\begin{equation*}
\textred{
	E^{f^N_{s,t}}(X)
}(\omega)
\mathbb{P}^N_s(\{\omega\})
	=
\sum_{\omega' \in \textblue{(f^N_{s,t})^{-1}(\omega)}}
	X(\omega') 
	\mathbb{P}^N_{t}(\{\omega'\}) .
\end{equation*}

\end{prop}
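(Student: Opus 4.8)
The plan is to read off the stated identity directly from the defining property \eqref{eq:condExp} of the conditional expectation, specialized to $f = f^N_{s,t} : \ProbB^N_t \to \ProbB^N_s$ and to a one-point test set. First I would observe that, since $\mathcal{F}^N_s = 2^{B^N_s}$ is the full power set, every singleton $\{\omega\}$ with $\omega \in B^N_s$ lies in $\mathcal{F}^N_s$, so we may take $A := \{\omega\}$ in \eqref{eq:condExp}. Because $\mathcal{B}^N$ is a generalized filtration in the sense of Definition \ref{defn:genFil}, the arrow $f^N_{s,t}$ is null-preserving, so the preceding theorem applies and gives
\begin{equation*}
\int_{\{\omega\}} E^{f^N_{s,t}}(X) \, d\mathbb{P}^N_s
  = \int_{(f^N_{s,t})^{-1}(\{\omega\})} X \, d\mathbb{P}^N_t .
\end{equation*}

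Next I would evaluate each side. On the left, integrating a function over the single point $\omega$ against $\mathbb{P}^N_s$ simply returns its value there weighted by the mass of that point, so the left-hand side equals $E^{f^N_{s,t}}(X)(\omega)\,\mathbb{P}^N_s(\{\omega\})$. On the right, the point is that the discrete interval $(0,t]^N$ is finite, hence $B^N_t = \{0,1\}^{(0,t]^N}$ is a finite set and $\ProbB^N_t$ is a purely atomic probability space; therefore $(f^N_{s,t})^{-1}(\{\omega\})$ is a finite subset of $B^N_t$ and the integral over it collapses to the finite sum $\sum_{\omega' \in (f^N_{s,t})^{-1}(\{\omega\})} X(\omega')\,\mathbb{P}^N_t(\{\omega'\})$. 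Equating the two evaluations yields precisely the claimed formula (with $(f^N_{s,t})^{-1}(\omega)$ read as $(f^N_{s,t})^{-1}(\{\omega\})$).

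There is no serious obstacle here; the work is purely a matter of unwinding definitions. The two points that merit a little care are: (i) confirming that the set on which we test \eqref{eq:condExp} really belongs to $\mathcal{F}^N_s$, which is trivial because that $\sigma$-field is the power set; and (ii) justifying the reduction of the Lebesgue integral to a finite sum, which is immediate from the finiteness of $B^N_t$ and the explicit product form of $\mathbb{P}^N_t$. One should also note that where $\mathbb{P}^N_s(\{\omega\}) = 0$ the pointwise value $E^{f^N_{s,t}}(X)(\omega)$ is not canonically pinned down (the conditional expectation is only determined up to a $\mathbb{P}^N_s$-null set), but in that case null-preservation forces $\mathbb{P}^N_t(\{\omega'\}) = 0$ for every $\omega'$ in the preimage as well, so both sides of the identity vanish and the statement remains unambiguous. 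Were one to allow infinite discrete intervals (e.g.\ $t = \infty$), the same argument would go through provided $X$ is $\mathbb{P}^N_t$-integrable, with the finite sum replaced by an absolutely convergent series.
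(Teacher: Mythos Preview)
Your argument is correct and is exactly the paper's approach: it too simply takes $A := \{\omega\}$ and $f := f^N_{s,t}$ in \eqref{eq:condExp} and declares the result immediate. Your additional remarks about why the integrals reduce to point masses and finite sums, and about the $\mathbb{P}^N_s(\{\omega\}) = 0$ case, are more explicit than the paper but add nothing substantively new.
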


\begin{proof}
Put
$
A := \{ \omega \}
$
and
$
f_{s, t} := f^N_{s, t}
$
in
(\ref{eq:condExp}).
Then the result is straightforward.
\end{proof}

In order to see a variety of filtrations, we introduce two candidates of
$f^N_{s,t}$
introduced in
Definition \ref{defn:filB}.

\begin{defn}[{Two Candidates of $f^N_{s,t}$}]
\label{defn:twoCandF}
Let
$
s, t
$
be objects of
$\mathcal{T}^N$
satisfying
$
s < t
$.

\begin{enumerate}
\item
$
\textred{
	\mathbf{full}^N_{s,t}
}
$
\begin{equation*}
\xymatrix@C=35 pt@R=15 pt{
	\ProbB^N_s
        \ar @{}_{ \mathrel{ \rotatebox[origin=c]{90}{$\in$} } } @<+6pt> [d]
&
	\ProbB^N_t
        \ar @{}_{ \mathrel{ \rotatebox[origin=c]{90}{$\in$} } } @<+6pt> [d]
        \ar @{->}_{\textred{
			\mathbf{full}^N_{s,t}
		}} [l]
\\
	\omega \mid_{(0, s]^N}
&
	\omega
        \ar @{|->} [l]
}
\end{equation*}
 
\item
$
\textred{
	\mathbf{drop}^N_{s,t}
}
$
\begin{equation*}
\xymatrix@C=35 pt@R=15 pt{
	\ProbB^N_s
        \ar @{}_{ \mathrel{ \rotatebox[origin=c]{90}{$\in$} } } @<+6pt> [d]
&
	\ProbB^N_t
        \ar @{}_{ \mathrel{ \rotatebox[origin=c]{90}{$\in$} } } @<+6pt> [d]
        \ar @{->}_{\textred{
			\mathbf{drop}^N_{s,t}
		}} [l]
\\
	\mathbf{full}^N_{s,t}(\omega) \times \textblue{\mathbb{1}_{(0,s\textred{)}^N}}
&
	\omega
        \ar @{|->} [l]
}
\end{equation*}
 
\end{enumerate}
\end{defn}

The function 
$\mathbf{drop}^N_{s,t}$
can be interpreted to forget what happens at time
$s$.


We can easily show the following proposition.

\begin{prop}
\label{prop:fulldrop}
For
$
s < t < u
$
in
$
[0, \infty]^N
$,
\begin{enumerate}
\item
$
\mathbf{full}^N_{s,t}
	\circ
\mathbf{full}^N_{t,u}
	=
\mathbf{full}^N_{s,u}
$ ,

\item
$
\mathbf{full}^N_{s,t}
	\circ
\mathbf{drop}^N_{t,u}
	=
\mathbf{full}^N_{s,u}
$ ,

\item
$
\mathbf{drop}^N_{s,t}
	\circ
\mathbf{full}^N_{t,u}
	=
\mathbf{drop}^N_{s,u}
$ ,

\item
$
\mathbf{drop}^N_{s,t}
	\circ
\mathbf{drop}^N_{t,u}
	=
\mathbf{drop}^N_{s,u}
$ .
\end{enumerate}
\end{prop}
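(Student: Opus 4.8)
\emph{Proof plan.} The plan is to prove all four identities by the same elementary method: fix an arbitrary $\omega \in B^N_u$, evaluate the left-hand side on $\omega$, and check that the resulting element of $B^N_s$ agrees with the right-hand side on $\omega$ by comparing values at each dyadic point $r \in (0,s]^N$. The single structural fact on which the whole argument rests is that the strict inequality $s < t$ forces $(0,s]^N \subseteq (0,t)^N$: any $r \in (0,s]^N$ has the form $n2^{-N}$ with $0 < n2^{-N} \le s < t$, hence $r \in (0,t)^N$. In particular the point $t$ itself, which lies in $(0,u]^N$, never lies in $(0,s]^N$.

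First I would record the pointwise descriptions of the two maps from Definition \ref{defn:twoCandF} that are used repeatedly. For $\omega \in B^N_t$, the map $\mathbf{full}^N_{s,t}$ is the plain restriction $\omega \mapsto \omega \mid_{(0,s]^N}$, whereas $\mathbf{drop}^N_{s,t}(\omega)$ is the function on $(0,s]^N$ whose value is $\omega(r)$ at each $r \in (0,s)^N$ and $0$ at $r = s$, since $\mathbb{1}_{(0,s)^N}(s) = 0$ while $\mathbb{1}_{(0,s)^N}(r) = 1$ for $r < s$. With these in hand the four cases are short. Item (1) is just transitivity of restriction along $(0,s]^N \subseteq (0,t]^N \subseteq (0,u]^N$. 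Item (3) holds because the inner $\mathbf{full}^N_{t,u}$ merely restricts and does not touch the values on $(0,s]^N$, so the outer $\mathbf{drop}^N_{s,t}$ returns the function equal to $\omega$ on $(0,s)^N$ and $0$ at $s$, which is $\mathbf{drop}^N_{s,u}(\omega)$. Item (2) holds because $\mathbf{drop}^N_{t,u}$ alters $\omega$ only at the coordinate $t$, and since $t \notin (0,s]^N$ this alteration disappears under the outer restriction $\mathbf{full}^N_{s,t}$, leaving $\omega \mid_{(0,s]^N} = \mathbf{full}^N_{s,u}(\omega)$. Item (4) combines the last two remarks: $\mathbf{drop}^N_{t,u}$ changes $\omega$ only at $t \notin (0,s)^N$, and $\mathbf{drop}^N_{s,t}$ then restricts to $(0,s]^N$ and zeroes the value at $s$, with net effect $\mathbf{drop}^N_{s,u}(\omega)$.

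I do not expect a genuine obstacle; the only point needing care is the use of the strict inequality $s < t$ in items (2) and (4), where one must be sure that the coordinate $t$ sent to $0$ by the inner $\mathbf{drop}$ lies outside the index set $(0,s]^N$ (resp. $(0,s)^N$) seen by the outer map, so that the two operations do not interfere. Once that inclusion is noted, each case is a one-line unwinding of the definitions, and it is worth remarking afterwards that the identities (1)--(4), together with the evident diagonal relation, are precisely the cocycle condition of Definition \ref{defn:genFil}, so that $\{\mathbf{full}^N_{s,t}\}$ and $\{\mathbf{drop}^N_{s,t}\}$ each determine a filtration $\mathcal{B}^N$ in the sense of Definition \ref{defn:filB}.
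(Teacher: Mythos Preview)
Your proposal is correct; the paper itself does not give a proof of this proposition at all, simply stating ``We can easily show the following proposition,'' so your pointwise unwinding of Definition~\ref{defn:twoCandF} together with the observation $(0,s]^N \subseteq (0,t)^N$ for $s<t$ is exactly the expected argument. One small refinement to your closing remark: items (1) and (4) alone give the cocycle condition for $\mathbf{Full}^N$ and for the pure $\mathbf{drop}$ family respectively, while the mixed identities (2) and (3) are what the paper actually uses (just after Definition~\ref{defn:expSurFil}) to verify that the hybrid functor $\mathbf{Drop}^N_{\alpha,\beta}$ is well-defined.
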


\begin{defn}[{Examples of (Subjective) Filtrations}]
\label{defn:expSurFil}
Let
$
s, t
$
be
any objects of
$\mathcal{T}^N$
such that
$
s < t
$.
\begin{enumerate}
\item
\textblue{Classical filtration}:
$
\textred{
	\mathbf{Full}^N
}
	: \mathcal{T}^N \to \Prob
$
is defined by
\begin{equation*}
\textred{
	\mathbf{Full}^N(
		\iota^N_{s,t}
	)
}
	:=
\textblue{
	\mathbf{full}^N_{s,t}
} .
\end{equation*}

\item
\textblue{Dropped filtration}:
$
\textred{
	\mathbf{Drop}^N_{\alpha, \beta}
}
	: \mathcal{T}^N \to \Prob
$
where
$
\textblue{\alpha},
\textblue{\beta}
	\in
\mathbb{R}
$
are constants,
is defined by
\begin{equation*}
\textred{
	\mathbf{Drop}^N_{\alpha, \beta}
	(
		\iota^N_{s,t}
	)
}
	:= \begin{cases}
\textblue{\mathbf{drop}^N_{s,t}}
\quad \textrm{if} \quad
s \ne t
\; \textrm{and} \;
s \in [\textblue{\alpha}, \textblue{\beta}]^N ,
	\\
\textblue{\mathbf{full}^N_{s,t}}
\quad \textrm{otherwise} .
\end{cases}
\end{equation*}
A person who has a subjective filtration
$\mathbf{Drop}^N_{\alpha, \beta}$
forgets the events happened during 
$[\alpha, \beta]$.
\end{enumerate}

\end{defn}

Note also that the dropped filtration is \textblue{well-defined} by
Proposition \ref{prop:fulldrop}.

\begin{defn}{[$\mathcal{B}$-Adapted Process $\xi^N_t$]}
\label{defn:adaptedPrs}
Let
$
t \in [0,+\infty)^N
$.
a stochastic process
$
\textred{\xi^N_t}
	:
B^N_t \to \mathbb{R}
$
is defined by
\begin{equation*}
\xi^N_t(\omega)
	:=
2 \omega(t) - 1
	\quad
(\forall \omega \in B^N_t) .
\end{equation*}

\end{defn}

\begin{defn}
\label{defn:Itj}
For
$j = 0, 1$
and
$
\omega
	\in
B^N_t
$,
\begin{equation*}
I^N_{t}(j,\omega)
	:=
\{
	\omega'
		\in
	(f^N_{t,t+2^{-N}})^{-1}(\omega)
		\mid
	\omega'(t+2^{-N}) = j
\} .
\end{equation*}
\end{defn}

\begin{prop}
\label{prop:xiPr}
For
$
\omega \in B^N_t
$
with
$
\mathbb{P}^N_t(\omega) \ne 0
$,
\begin{equation*}
E^{
	f^N_{t,t+2^{-N}}
}
(\xi^N_{t+2^{-N}})
(\omega)
	=
\#((f^N_{t,t+2^{-N}})^{-1}(\omega)) p^N_{t+2^{-N}}
	- 
\#I^N_{t}(0, \omega) ,
\end{equation*}
where
$
\textred{\#}A
$
denotes the cardinality of the set 
$A$.
\end{prop}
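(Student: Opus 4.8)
The plan is to reduce the whole statement to Proposition \ref{prop:BMcondE} and then do an elementary count, organising the fibre $(f^N_{t,t+2^{-N}})^{-1}(\omega)$ according to the value its elements take at the last instant $t+2^{-N}$; the identity $\xi^N_u = \pm 1$ and the product form of the binomial measures do the rest.

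Write $u := t+2^{-N}$, $p := p^N_u$ and $f := f^N_{t,u}$. Since $\mathbb{P}^N_t(\{\omega\}) \ne 0$, Proposition \ref{prop:BMcondE} applied to $X = \xi^N_u$ gives
\[
E^{f}(\xi^N_u)(\omega)\,\mathbb{P}^N_t(\{\omega\}) = \sum_{\omega' \in f^{-1}(\omega)} \xi^N_u(\omega')\,\mathbb{P}^N_u(\{\omega'\}).
\]
By Definition \ref{defn:Itj} the fibre splits as the disjoint union $f^{-1}(\omega) = I^N_t(0,\omega)\sqcup I^N_t(1,\omega)$, and by Definition \ref{defn:adaptedPrs} one has $\xi^N_u(\omega') = 2\omega'(u)-1$, so $\xi^N_u$ is identically $1$ on $I^N_t(1,\omega)$ and identically $-1$ on $I^N_t(0,\omega)$. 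Hence the right-hand side equals $\sum_{\omega'\in I^N_t(1,\omega)}\mathbb{P}^N_u(\{\omega'\}) - \sum_{\omega'\in I^N_t(0,\omega)}\mathbb{P}^N_u(\{\omega'\})$.

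Next I would exploit the product form of $\mathbb{P}^N_u$ together with $(0,u]^N = (0,t]^N\sqcup\{u\}$: for every $\omega'\in B^N_u$,
\[
\mathbb{P}^N_u(\{\omega'\}) = \mathbb{P}^N_t\bigl(\{\omega'|_{(0,t]^N}\}\bigr)\,(p)^{\omega'(u)}(1-p)^{1-\omega'(u)},
\]
and for $\omega'$ in the fibre over $\omega$ the first factor collapses to $\mathbb{P}^N_t(\{\omega\})$. Substituting $\omega'(u)=1$ on $I^N_t(1,\omega)$ and $\omega'(u)=0$ on $I^N_t(0,\omega)$ turns the two sums into $\#I^N_t(1,\omega)\cdot p\cdot\mathbb{P}^N_t(\{\omega\})$ and $\#I^N_t(0,\omega)\cdot(1-p)\cdot\mathbb{P}^N_t(\{\omega\})$. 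Dividing by $\mathbb{P}^N_t(\{\omega\})\ne 0$ and using $\#(f^{-1}(\omega)) = \#I^N_t(0,\omega)+\#I^N_t(1,\omega)$ yields
\[
E^{f}(\xi^N_u)(\omega) = p\,\#I^N_t(1,\omega) - (1-p)\,\#I^N_t(0,\omega) = p\,\#(f^{-1}(\omega)) - \#I^N_t(0,\omega),
\]
which is the assertion.

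The only step that is not pure bookkeeping is the identity $\mathbb{P}^N_t(\{\omega'|_{(0,t]^N}\}) = \mathbb{P}^N_t(\{\omega\})$ for $\omega'$ in the fibre: it is immediate when $f^N_{t,u}=\mathbf{full}^N_{t,u}$, since then $\omega'|_{(0,t]^N}=\omega$ by Definition \ref{defn:twoCandF}, and in general it is the compatibility of the chosen arrows $f^N_{s,t}$ with the product structure of the measures $\mathbb{P}^N_{\bullet}$ that has to be invoked there. Everything else — splitting the fibre, evaluating $\xi^N_u$, and the product formula for $\mathbb{P}^N_u$ — is routine, so I expect the final write-up to be about as short as the proof of Proposition \ref{prop:BMcondE}.
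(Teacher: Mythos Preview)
Your argument is essentially the paper's own proof: apply Proposition~\ref{prop:BMcondE}, split the fibre as $I^N_t(0,\omega)\sqcup I^N_t(1,\omega)$, use the product form of $\mathbb{P}^N_{t+2^{-N}}$ to pull out the factor $p^N_{t+2^{-N}}$ or $1-p^N_{t+2^{-N}}$, and cancel $\mathbb{P}^N_t(\{\omega\})$. The one step you single out in your last paragraph --- that $\mathbb{P}^N_t(\{\omega'|_{(0,t]^N}\}) = \mathbb{P}^N_t(\{\omega\})$ for $\omega'$ in the fibre --- is passed over silently in the paper, which simply writes $\mathbb{P}^N_{t+2^{-N}}(\{\omega'\})/\mathbb{P}^N_t(\{\omega\}) = p^N_{t+2^{-N}}$ (resp.\ $1-p^N_{t+2^{-N}}$) without further comment; your caution there is well placed.
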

\begin{proof}
By Proposition \ref{prop:BMcondE},
\begin{align*}
E^{
	f^N_{t,t+2^{-N}}
}
(\xi^N_{t+2^{-N}})
(\omega)
	&=
\sum_{\omega' \in 
	(f^N_{t,t+2^{-N}})^{-1}(\omega)
	}
	\frac{
		\xi^N_{t+2^{-N}} (\omega')
		\mathbb{P}^N_{t+2^{-N}}(\omega')
	}{
		\mathbb{P}^N_t(\omega)
	}
	\\&=
\sum_{\omega' \in 
	(f^N_{t,t+2^{-N}})^{-1}(\omega)
	}
	\frac{
		(2 \omega'(t+2^{-N}) - 1)
		\mathbb{P}^N_{t+2^{-N}}(\omega')
	}{
		\mathbb{P}^N_t(\omega)
	}
	\\&=
\sum_{\omega' \in I^N_t(1, \omega)}
	\frac{\mathbb{P}^N_{t+2^{-N}}(\omega')}{\mathbb{P}^N_t(\omega)}
	-
\sum_{\omega' \in I^N_t(0, \omega)}
	\frac{\mathbb{P}^N_{t+2^{-N}}(\omega')}{\mathbb{P}^N_t(\omega)}
	\\&=
\sum_{\omega' \in I^N_t(1, \omega)}
	p^N_{t+2^{-N}}
	-
\sum_{\omega' \in I^N_t(0, \omega)}
	(1 - p^N_{t+2^{-N}})
	\\&=
\#((f^N_{t,t+2^{-N}})^{-1}(\omega)) p^N_{t+2^{-N}}
	- 
\#I^N_{t}(0, \omega) .
\end{align*}
\end{proof}

\subsection{Arbitrage Strategies}
\label{sec:ArbStr}

Now we define two instruments tradable in our market.

\begin{defn}{[Stock and Bond Processes]}
\label{defn:stockAndBond}
Let
$
\mu, \sigma, r
	\in
\mathbb{R}
$
be constants
such that
$
\sigma > 0
$,
$
\mu > \sigma - 1
$
and
$
r > -1
$.
We have the following 
$\mathcal{B}^N$-adapted processes
which are
two instruments tradable in our market.
Let
$
t \in [0,+\infty)^N
$.
\begin{enumerate}
\item
A 
\newword{stock process} 
$
\textred{S^N_t}
	:
B^N_t \to \mathbb{R}
$
 over the filtration 
$\mathcal{B}^N$
is defined by
\begin{equation*}
S^N_0(\textred{*})
	:= \textred{s_0},
\quad
S^N_{t+2^{-N}}
	:=
(S^N_t \circ \textblue{f^N_{t,t+2^{-N}}})
(1 + 2^{-N} \mu + 2^{-\frac{N}{2}} \sigma \xi^N_{t+2^{-N}})
\end{equation*}
where
$
\textred{*}
	\in
B^N_0
$
is the unique element.

\item
A \newword{bond process}
$
\textred{b^N_t}
	:
B^N_t \to \mathbb{R}
$
 over the filtration 
$\mathcal{B}^N$
is defined by
\begin{equation*}
b^N_0(*) := 1,
\quad
b^N_{t+2^{-N}}
	:=
(b^N_t \circ \textblue{f^N_{t,t+2^{-N}})}
(1+2^{-N}r) .
\end{equation*}

\end{enumerate}

\end{defn}

The condition
$
\mu > \sigma - 1
$
is necessary for keeping the stock price positive.

We sometimes call the triple
$
(\mathcal{B}^N, S^N, b^N)
$
a
\newword{market}.
But, it does not mean that
the market will not contain other instruments.

The following proposition is straightforward.

\begin{prop}
\label{prop:EfNStNxiN}
Let
$
1_{B^N_t}
$
be a random variable 
on
$
B^N_t
$
defined by
$
1_{B^N_t}
(\omega)
	= 1
$
for every
$\omega \in B^N_t$.
Then, we have
for any
$
\omega \in B^N_t
$,
\begin{enumerate}
\item
$
E^{f^N_{t,t+2^{-N}}}
(S^N_{t+2^{-N}})
	=
S^N_t
\big(
(1 + 2^{-N}\mu)
E^{f^N_{t,t+2^{-N}}}
(1_{B^N_{t+2^{-N}}})
	+
2^{-\frac{N}{2}} \sigma
E^{f^N_{t,t+2^{-N}}}
{\xi^N_{t+2^{-N}}})
\big) ,
$

\item
$
E^{f^N_{t,t+2^{-N}}}
(1_{B^N_{t+2^{-N}}})
	(\omega)
	=
\{
	\mathbb{P}^N_{t+2^{-N}}
	((f^N_{t,t+2^{-N}})^{-1}(\omega))
\}
	/
\{
	\mathbb{P}^N_{t}
	(\omega)
\} ,
$

\item
$
b^N_t(\omega)
	=
(1+2^{-N}r)^{2^N t} .
$
\end{enumerate}
\end{prop}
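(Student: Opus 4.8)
The three claims are of increasing triviality, so the plan is to dispatch (iii) first, then (ii), then assemble (i) from the recursion defining $S^N$.

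For part (iii), I would argue by induction on $t \in [0,+\infty)^N$. The base case is $b^N_0(*) = 1 = (1+2^{-N}r)^{2^N \cdot 0}$. For the inductive step, using the defining recursion
\[
b^N_{t+2^{-N}} = (b^N_t \circ f^N_{t,t+2^{-N}})(1+2^{-N}r),
\]
and observing that $b^N_t \circ f^N_{t,t+2^{-N}}$ is the constant $(1+2^{-N}r)^{2^N t}$ (since $b^N_t$ is already constant by the inductive hypothesis, precomposition with any function leaves it unchanged), we get $b^N_{t+2^{-N}} = (1+2^{-N}r)^{2^N t + 1} = (1+2^{-N}r)^{2^N(t+2^{-N})}$, completing the induction.

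For part (ii), this is just the definition of conditional expectation specialized to the constant random variable $1_{B^N_{t+2^{-N}}}$. Apply Proposition \ref{prop:BMcondE} with $X = 1_{B^N_{t+2^{-N}}}$: the right-hand side becomes $\sum_{\omega' \in (f^N_{t,t+2^{-N}})^{-1}(\omega)} \mathbb{P}^N_{t+2^{-N}}(\{\omega'\}) = \mathbb{P}^N_{t+2^{-N}}((f^N_{t,t+2^{-N}})^{-1}(\omega))$, and dividing through by $\mathbb{P}^N_t(\{\omega\})$ gives the stated formula (on the set where $\mathbb{P}^N_t(\{\omega\}) \ne 0$; elsewhere both sides vanish by the absolutely continuous / null-preserving property).

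For part (i), substitute the recursion
\[
S^N_{t+2^{-N}} = (S^N_t \circ f^N_{t,t+2^{-N}})(1 + 2^{-N}\mu + 2^{-\frac{N}{2}}\sigma\,\xi^N_{t+2^{-N}})
\]
into $E^{f^N_{t,t+2^{-N}}}(S^N_{t+2^{-N}})$. The factor $S^N_t \circ f^N_{t,t+2^{-N}}$ is, by definition of the filtration arrows, constant along each fibre $(f^N_{t,t+2^{-N}})^{-1}(\omega)$ — indeed it equals $S^N_t(\omega)$ there — so it pulls out of the conditional expectation. What remains is $S^N_t(\omega)$ times the conditional expectation of $1 + 2^{-N}\mu + 2^{-\frac{N}{2}}\sigma\,\xi^N_{t+2^{-N}}$, and linearity of $E^{f^N_{t,t+2^{-N}}}(\cdot)$ (immediate from \eqref{eq:condExp}) splits this into $(1 + 2^{-N}\mu)\,E^{f^N_{t,t+2^{-N}}}(1_{B^N_{t+2^{-N}}}) + 2^{-\frac{N}{2}}\sigma\,E^{f^N_{t,t+2^{-N}}}(\xi^N_{t+2^{-N}})$, which is exactly the asserted identity.

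The one point deserving care — the only thing I would call a genuine (minor) obstacle — is justifying that $S^N_t \circ f^N_{t,t+2^{-N}}$ may be treated as a constant inside the conditional expectation, i.e. that it is $\mathcal{F}^N_t$-measurable when pulled back, or concretely that it is constant on fibres. This follows because $\big(S^N_t \circ f^N_{t,t+2^{-N}}\big)(\omega') = S^N_t\big(f^N_{t,t+2^{-N}}(\omega')\big) = S^N_t(\omega)$ for every $\omega' \in (f^N_{t,t+2^{-N}})^{-1}(\omega)$; granting this, the pull-out step is legitimate via the fibrewise formula of Proposition \ref{prop:BMcondE}, and the rest is bookkeeping.
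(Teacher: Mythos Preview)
Your proposal is correct. The paper itself does not give a proof, dismissing the proposition as ``straightforward''; your argument is exactly the kind of routine verification the authors had in mind, using the fibrewise formula of Proposition~\ref{prop:BMcondE} together with the defining recursions for $S^N$ and $b^N$.
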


\begin{defn}{[Strategies]}
A
\newword{strategy}
is a sequence
$
(\phi, \psi)
	=
\{
(\phi_t, \psi_t)
\}_{t \in (0, \infty)^N }
$,
where
\begin{equation}
\phi_t : B^N_{t-2^{-N}} \to \mathbb{R}
\; \textrm{and} \;
\psi_t : B^N_{t-2^{-N}} \to \mathbb{R} .
\end{equation}
Each element of the strategy
$
(\phi_t, \psi_t)
$
is called a 
\newword{portfolio}.
For
$
t \in [0, \infty)^N
$,
the 
\newword{value}
$V_t$
 of the portfolio at time $t$
is determined by:
\begin{equation}
V_t
	:=
\begin{cases}
S^N_0 \phi_{2^{-N}}
	+
b^N_0 \psi_{2^{-N}}
	\quad \textrm{if} \quad
t = 0 ,
	\\
S^N_t 
( \phi_{t} \circ 
	f^N_{t-2^{-N},t}
)
	+
b_n 
( \psi_{t} \circ
	f^N_{t-2^{-N},t}
)
	\quad \textrm{if} \quad
t > 0 .
\end{cases} 
\end{equation}

\end{defn}

\begin{defn}{[Gain Processes]}
A
\newword{
gain process
}
of the strategy
$
(\phi, \psi)
$
is the process
$
\{
G^{(\phi, \psi)}_t
\}_{t \in [0, \infty)^N }
$
defined by
\begin{equation}
G^{(\phi, \psi)}_t
	:=
\begin{cases}
	-
(
	S^N_0 \phi_{2^{-N}}
		+
	b^N_0 \psi_{2^{-N}}
)
\; \textrm{if} \;
t = 0 ,
\\
(
	S^N_t (\phi_t \circ 
		f^N_{t-2^{-N},t}
	)
		+
	b^N_t (\psi_t \circ 
		f^N_{t-2^{-N},t}
	)
)
	-
(
	S^N_t \phi_{t+2^{-N}}
		+
	b^N_t \psi_{t+2^{-N}}
)
\; \textrm{if} \;
t > 0 .
\end{cases}
\end{equation}
\end{defn}

\begin{lem}
\label{lem:spbp}
Let 
$
t
\in [0, \infty)^N
$
with
\begin{equation}
S^N_t \phi_{t+2^{-N}}
	+
b^N_t \psi_{n+2^{-N}}
	= 0 .
\end{equation}
Then, we have
\begin{equation}
S^N_{t+2^{-N}}
(\phi_{t+2^{-N}} \circ 
	f^N_{t, t+2^{-N}}
)
	+
b_{n+1}
(\psi_{n+2^{-N}} \circ 
	f^N_{t, t+2^{-N}}
)
	=
(2^{-N} \mu + 2^{-\frac{N}{2}} \sigma \xi^N_{t+2^{-N}} - 2^{-N} r)
( (S^N_t \phi_{t+2^{-N}}) \circ 
	f^N_{t, t+2^{-N}}
) .
\end{equation}

\end{lem}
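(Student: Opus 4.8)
The plan is to unwind the one-step recursions defining the stock and bond processes in Definition~\ref{defn:stockAndBond} and then transport the algebraic hypothesis $S^N_t\phi_{t+2^{-N}}+b^N_t\psi_{t+2^{-N}}=0$ backward along the transition map $f^N_{t,t+2^{-N}}$. Concretely, I would abbreviate $f:=f^N_{t,t+2^{-N}}$, $\phi:=\phi_{t+2^{-N}}$, $\psi:=\psi_{t+2^{-N}}$, $\xi:=\xi^N_{t+2^{-N}}$, and substitute
\[
S^N_{t+2^{-N}} = (S^N_t\circ f)\,(1 + 2^{-N}\mu + 2^{-\frac{N}{2}}\sigma\xi),
\qquad
b^N_{t+2^{-N}} = (b^N_t\circ f)\,(1 + 2^{-N}r)
\]
into the left-hand side of the asserted identity. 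Since all quantities involved are real-valued functions on $B^N_{t+2^{-N}}$ and the operations are pointwise, I can regroup the result as
\[
\bigl((S^N_t\phi)\circ f\bigr)\,(1 + 2^{-N}\mu + 2^{-\frac{N}{2}}\sigma\xi)
\;+\;
\bigl((b^N_t\psi)\circ f\bigr)\,(1 + 2^{-N}r).
\]

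The key step is to note that precomposition with $f$ respects pointwise sums and products of random variables, so the hypothesis $S^N_t\phi + b^N_t\psi = 0$, valid on all of $B^N_t$, yields $(b^N_t\psi)\circ f = -\,(S^N_t\phi)\circ f$ on $B^N_{t+2^{-N}}$. Substituting this into the expression above collapses the two summands into
\[
\bigl((S^N_t\phi)\circ f\bigr)\Bigl[(1 + 2^{-N}\mu + 2^{-\frac{N}{2}}\sigma\xi) - (1 + 2^{-N}r)\Bigr],
\]
and the constant terms $1$ cancel, leaving exactly $(2^{-N}\mu + 2^{-\frac{N}{2}}\sigma\xi^N_{t+2^{-N}} - 2^{-N}r)\bigl((S^N_t\phi_{t+2^{-N}})\circ f^N_{t,t+2^{-N}}\bigr)$, which is the claimed right-hand side.

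There is essentially no analytic difficulty here; the whole content is algebraic telescoping of the recursions. The one point that warrants an explicit line is the pullback of the hypothesis along $f^N_{t,t+2^{-N}}$: this map is in general neither injective nor surjective, but because composition acts on functions pointwise it is still compatible with the ring operations, so an identity holding everywhere on $B^N_t$ continues to hold after composing with $f$. That observation, together with the substitutions above, finishes the argument.
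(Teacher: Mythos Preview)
Your proposal is correct and follows essentially the same approach as the paper: substitute the one-step recursions for $S^N_{t+2^{-N}}$ and $b^N_{t+2^{-N}}$, regroup, use the hypothesis to replace $(b^N_t\psi)\circ f$ by $-(S^N_t\phi)\circ f$, and cancel the $1$'s. The only difference is that you spell out why the pullback along $f$ respects the pointwise identity, which the paper leaves implicit.
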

\begin{proof}
\begin{align*}
	&
LHS
	\\=&
(S^N_t \circ 
	f^N_{t, t+2^{-N}}
)
(1 + 2^{-N} \mu + 2^{-\frac{N}{2}} \sigma \xi^N_{t+2^{-N}})
(\phi_{t+2^{-N}} \circ 
	f^N_{t, t+2^{-N}}
)
	\\&
	+
(b^N_t \circ 
	f^N_{t, t+2^{-N}}
)
(1 + 2^{-N} r)
(\psi_{t+2^{-N}} \circ 
	f^N_{t, t+2^{-N}}
)
	\\=&
(1 + 2^{-N} \mu + 2^{-\frac{N}{2}} \sigma \xi^N_{t+2^{-N}})
((S^N_t \phi_{t+2^{-N}})
	\circ 
	f^N_{t, t+2^{-N}}
)
	+
(1 + 2^{-N} r)
((b^N_t \psi_{t+2^{-N}}) \circ 
	f^N_{t, t+2^{-N}}
)
	\\=&
(1 + 2^{-N} \mu + 2^{-\frac{N}{2}} \sigma \xi^N_{t+2^{-N}})
((S^N_t \phi_{t+2^{-N}})
	\circ 
	f^N_{t, t+2^{-N}}
)
	-
(1 + 2^{-N} r)
((S^N_t \phi_{t+2^{-N}}) \circ 
	f^N_{t, t+2^{-N}}
)
	\\=&
RHS .
\end{align*}
\end{proof}

\begin{defn}{[Arbitrage Strategies]}
\begin{enumerate}
\item
A strategy
$
(\phi, \psi)
$
is called a
$\mathcal{B}^N$-\newword{arbitrage strategy}
if
$
\mathbb{P}^N_{t}\big(
	G^{(\phi, \psi)}_t
		\ge 0
\big) = 1
$
for every 
$
t \in [0, \infty)^N
$,
and
$
\mathbb{P}^N_{t_0}\big(
	G^{(\phi, \psi)}_{t_0}
	> 0
\big) > 0
$
for some
$
t_0 \in [0, \infty)^N
$.

\item
The market is called 
\newword{non-arbitrage}
or
\textbf{NA}
if it does not allow 
$\mathcal{B}^N$-arbitrage strategies.

\end{enumerate}
\end{defn}

\begin{prop}
If
the market
$
(\mathcal{B}^N, S^N, b^N)
$
with a non-trivial filtration
$\mathcal{B}^N$
 is non-arbitrage,
then
$
\abs{
	\textred{\mu}
		-
	\textred{r}
}
	< 
2^{\frac{N}{2}}
\textred{\sigma} .
$

\end{prop}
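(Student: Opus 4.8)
The plan is to argue by contraposition. Assume $\abs{\mu - r} \ge 2^{\frac{N}{2}}\sigma$; I will exhibit a $\mathcal{B}^N$-arbitrage strategy, which contradicts \textbf{NA}. The strategy will be of ``trade once and unwind'' type: since $\mathcal{B}^N$ is non-trivial, fix a time $t^* \in (0, \infty)^N$ with $0 < p^N_{t^*} < 1$; open a zero-cost position in stock and bond at time $t^* - 2^{-N}$, liquidate it completely at $t^*$, and keep the portfolio empty at every other time. Localizing the trade this way is what will make the gain process vanish off $t^*$, which is exactly what is needed (see the last paragraph).

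Assume first $\mu - r \ge 2^{\frac{N}{2}}\sigma$, equivalently $2^{-N}(\mu - r) \ge 2^{-\frac{N}{2}}\sigma$. Set $\phi_{t^*} := 1_{B^N_{t^*-2^{-N}}}$ and $\psi_{t^*} := -S^N_{t^*-2^{-N}}/b^N_{t^*-2^{-N}}$, which is legitimate since $b^N_{t^*-2^{-N}} = (1+2^{-N}r)^{2^N(t^*-2^{-N})} > 0$ by $r > -1$ and Proposition \ref{prop:EfNStNxiN}, and put $\phi_s = \psi_s = 0$ for every other $s$. Then $S^N_{t^*-2^{-N}}\phi_{t^*} + b^N_{t^*-2^{-N}}\psi_{t^*} = 0$, so unwinding the definition of the gain process gives $G^{(\phi,\psi)}_t = 0$ for all $t \ne t^*$ (at $t=0$ one uses $S^N_0 = s_0$, $b^N_0 = 1$), while Lemma \ref{lem:spbp} applied at $t^* - 2^{-N}$ gives
\[
G^{(\phi,\psi)}_{t^*} = \bigl(2^{-N}(\mu - r) + 2^{-\frac{N}{2}}\sigma\,\xi^N_{t^*}\bigr)\,\bigl(S^N_{t^*-2^{-N}} \circ f^N_{t^*-2^{-N},\,t^*}\bigr).
\]
Since $\xi^N_{t^*}$ takes only the values $\pm 1$ and the stock price is positive, the right-hand side is $\ge 0$ everywhere and is strictly positive on $\{\,\omega : \omega(t^*) = 1\,\}$, an event of $\mathbb{P}^N_{t^*}$-measure $p^N_{t^*} > 0$. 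Thus $(\phi, \psi)$ is a $\mathcal{B}^N$-arbitrage strategy. The case $\mu - r \le -2^{\frac{N}{2}}\sigma$ is symmetric: replace $\phi_{t^*}$ by $-1_{B^N_{t^*-2^{-N}}}$ and $\psi_{t^*}$ by $S^N_{t^*-2^{-N}}/b^N_{t^*-2^{-N}}$, so that the favorable event becomes $\{\,\omega : \omega(t^*) = 0\,\}$, of measure $1 - p^N_{t^*} > 0$. (Both cases can be written uniformly using $\phi_{t^*} := \sgn(\mu - r)\cdot 1_{B^N_{t^*-2^{-N}}}$.)

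The one point requiring care is that a $\mathcal{B}^N$-arbitrage strategy needs $G^{(\phi,\psi)}_t \ge 0$ at \emph{every} time, not just a nonnegative terminal payoff; restricting the trade to the two steps around $t^*$ and unwinding fully is precisely what forces $G^{(\phi,\psi)}$ to be identically zero off $t^*$, so this is met automatically. The two hypotheses enter transparently: non-triviality supplies a time at which the favorable branch of the stock carries positive probability, and positivity of the stock price (the reason for the standing assumption $\mu > \sigma - 1$) lets the sign of $G^{(\phi,\psi)}_{t^*}$ be read off the bracket $2^{-N}(\mu - r) \pm 2^{-\frac{N}{2}}\sigma$. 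Everything else is the algebra already packaged in Lemma \ref{lem:spbp}, so I expect no further obstacle.
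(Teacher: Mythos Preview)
Your proof is correct and follows essentially the same approach as the paper: argue by contraposition, open a zero-cost stock/bond position with the sign of $\phi$ dictated by the sign of $\mu-r$, invoke Lemma~\ref{lem:spbp} to read off the gain, and use non-triviality to locate a time where the favorable branch has positive probability. The only difference is cosmetic: the paper re-opens a zero-cost position at every step (so $G_t\ge 0$ for all $t$), whereas you localize the trade to a single step around $t^*$ (so $G_t=0$ off $t^*$); your version is a mild simplification but not a different idea.
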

\begin{proof}
Assuming that
$
r 
	\le 
\mu
	- 
2^{\frac{N}{2}} \sigma
$
or
$
r 
	\ge 
\mu
	- 
2^{\frac{N}{2}} \sigma
$,
we will construct an arbitrage strategy
$
(\phi, \psi)
$
by using the following algorithm.
\begin{lstlisting}[basicstyle=\ttfamily\footnotesize,frame=single]
for t = 0, 1, 2, ...:
    t := 2^(-N) n
    observe S(t) and b(t)
    if r <= mu - 2^N  sigma:
        phi(t+2^(-N)) > 0  # pick arbitrarily
    elsif r >= mu + 2^N  sigma:
        phi(t+2^(-N)) < 0  # pick arbitrarily
    psi(t+2^(-N)) := -(S(t) / b(t)) phi(t+2^(-N))
    # Then, we have S(t) phi(t+2^(-N)) + b(t) psi(t+2^(-N)) = 0,
    # which simplifies the computation of G(t) in the following.
    if t == 0:
        G(0) := 0
    else:  # t > 1
        G(t) := S(t) (phi(t) * f(t-2^(-N))) + b(t) (psi(t) * f(t-2^(-N)))
\end{lstlisting}
In the above code, 
`*'
is the function composition operator.

By Lemma \ref{lem:spbp},
we have
\begin{equation}
G^{(\phi, \psi)}_{t}
	=
2^{-N}
(\mu
	+ 
2^{\frac{N}{2}} \sigma \xi^N_t
	-
r)
((S^N_{t-2^{-N}} \phi_t) \circ 
	f^N_{t-2^{-N}, t}
) .
\end{equation}
So
we have
$
G^{(\phi, \psi)}_{t}
\ge 0
$
as long as 
$
r \le \mu - 2^{N} \sigma
$
or
$
r \ge \mu + 2^{N} \sigma
$.

By the way,
since our filtration is non-trivial,
there exists a number 
$t_0$
such that
$
0 < p_{t_0} < 1
$.
It is easy to check that
\begin{equation}
\mathbb{P}^N_{t_0}(
	G^{(\phi, \psi)}_{t_0}
	>0
) > 0 ,
\end{equation}
which concludes that
$
(\phi, \psi)
$
is an arbitrage strategy.
\end{proof}

\subsection{Risk-Neutral Filtrations}
\label{sec:riskNeuFil}

In this subsection,
we assume that
$
\abs{
	\textred{\mu}
		-
	\textred{r}
}
	< 
2^{\frac{N}{2}}
\textred{\sigma}
$.

Let us consider about the following discounted stock process

\begin{defn}
\label{defn:discountStock}
A \newword{discount stock process}
$
\textred{(S^N_t)'}
	:
B^N_t \to \mathbb{R}
$
is defined by
\begin{equation*}
(S^N_t)'
	:=
(b^N_t)^{-1}
S^N_t .
\end{equation*}
\end{defn}

\begin{defn}
\label{defn:riskNeutralFil}
A
\newword{risk-neutral filtration}
with respect to the filtration
$\mathcal{B}^N$
is a filtration
$\textred{\mathcal{C}^N}$
such that
\vspace{-0.8\baselineskip}
\begin{equation*}
U \circ \mathcal{C}^N
	=
U \circ \mathcal{B}^N ,
\end{equation*}
where 
$
\textred{U} :
\Prob
	\to
\mathbf{Meas}
$
is the \textblue{forgetful functor}
to the category of measurable spaces,
\vspace{-0.8\baselineskip}
\[
\xymatrix{
	\mathcal{T}^N
        \ar @{->}^{\textred{\mathcal{C}^N}} @<+3pt> [r]
        \ar @{->}_{\mathcal{B}^N} @<-3pt> [r]
&
    \Prob
        \ar @{->}^{\textred{U}} [r]
&
	\mathbf{Meas}
}
\vspace{-0.8\baselineskip}
\]
and with which
$
\textblue{(S^N_t)'}
$
becomes a 
\textblue{$\mathcal{C}^N$-martingale},
that is,
\begin{equation*}
E^{
	\textblue{\mathcal{C}^N(\iota^N_{s,t})}
}
(
	(S^N_t)'
)
	=
	(S^N_s)' .
\end{equation*}

\end{defn}

In the remainder of this subsection, 
we will focus on proving the following theorem.

\begin{thm}
There exists a risk-neutral filtration
with respect to
the 
filtration
$
\mathbf{Drop}^N_{\alpha, \beta}
$.
\end{thm}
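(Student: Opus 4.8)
\section*{Proof proposal}

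The plan is to realize $\mathcal{C}^N$ with exactly the underlying data of $\mathbf{Drop}^N_{\alpha,\beta}$ — the measurable spaces $(B^N_t,\mathcal{F}^N_t)$ and the maps $\mathcal{C}^N(\iota^N_{s,t}):=\mathbf{Drop}^N_{\alpha,\beta}(\iota^N_{s,t})$, each of which is a $\mathbf{full}^N_{s,t}$ or a $\mathbf{drop}^N_{s,t}$ — and only to replace the probability measures $\mathbb{P}^N_t$ by new ones $\mathbb{Q}^N_t$. Then $U\circ\mathcal{C}^N=U\circ\mathbf{Drop}^N_{\alpha,\beta}$ holds by construction, and the whole content of the theorem is to choose the $\mathbb{Q}^N_t$ so that $(S^N_t)'$ becomes a $\mathcal{C}^N$-martingale. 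For the ``remembered'' times I would put the classical risk--neutral Bernoulli weight
\[
q^N_s:=\tfrac12+\frac{2^{-N/2}(r-\mu)}{2\sigma}\qquad\bigl(s\in(0,\infty)^N\setminus[\alpha,\beta]^N\bigr),
\]
observing that the standing hypothesis $\abs{\mu-r}<2^{N/2}\sigma$ is precisely what makes $0<q^N_s<1$, and choose the weights on $[\alpha,\beta]^N$ so that the coordinate frozen by $\mathbf{drop}$ carries no $\mathbb{Q}^N$-mass (this choice turns out to be forced; see the last paragraph). Then set $\mathbb{Q}^N_t(\{\omega\}):=\prod_{s\in(0,t]^N}(q^N_s)^{\omega(s)}(1-q^N_s)^{1-\omega(s)}$. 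Since the remembered weights lie strictly between $0$ and $1$, the situation of the note after Definition~\ref{defn:filB} applies and every map between the $\mathbb{Q}^N$-spaces is null--preserving; together with the composition law of the $\mathbf{full}/\mathbf{drop}$ maps (Proposition~\ref{prop:fulldrop}) this shows $\mathcal{C}^N$ is a well--defined functor $\mathcal{T}^N\to\Prob$.

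Next I would reduce the martingale identity of Definition~\ref{defn:riskNeutralFil} to one--step arrows. Because $E^{g}\circ E^{f}=E^{g\circ f}$ for composable null--preserving maps (apply (\ref{eq:condExp}) twice) and the arrows of $\mathcal{C}^N$ compose by Proposition~\ref{prop:fulldrop}, it suffices to prove $E^{\mathcal{C}^N(\iota^N_{t,\,t+2^{-N}})}\bigl((S^N_{t+2^{-N}})'\bigr)=(S^N_t)'$ for every $t$. From Proposition~\ref{prop:EfNStNxiN}(3) one has $b^N_{t+2^{-N}}=(1+2^{-N}r)\,b^N_t\circ f^N_{t,t+2^{-N}}$, so the recursion of Definition~\ref{defn:stockAndBond} makes the argument of the conditional expectation equal to $(1+2^{-N}r)^{-1}\bigl((S^N_t)'\circ f^N_{t,t+2^{-N}}\bigr)\bigl(1+2^{-N}\mu+2^{-N/2}\sigma\,\xi^N_{t+2^{-N}}\bigr)$. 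Pulling the $f^N_{t,t+2^{-N}}$--measurable factor out of $E^{f^N_{t,t+2^{-N}}}$ and applying Propositions~\ref{prop:BMcondE}, \ref{prop:xiPr}, \ref{prop:EfNStNxiN} collapses the claim to a scalar identity in $q^N_{t+2^{-N}}$ (and, in the drop case, $q^N_t$).

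The case split is short. If $t\notin[\alpha,\beta]^N$, $\mathcal{C}^N(\iota^N_{t,t+2^{-N}})=\mathbf{full}^N_{t,t+2^{-N}}$ and the identity reads
\[
1+2^{-N}\mu+2^{-N/2}\sigma\,(2q^N_{t+2^{-N}}-1)=1+2^{-N}r,
\]
which holds by the definition of $q^N$. If $t\in[\alpha,\beta]^N$, $\mathcal{C}^N(\iota^N_{t,t+2^{-N}})=\mathbf{drop}^N_{t,t+2^{-N}}$; here one uses in addition that the recursion over $\mathbf{Drop}^N_{\alpha,\beta}$ composes $S^N_t$ with $\mathbf{drop}^N_{t,t+2^{-N}}$ (Definition~\ref{defn:twoCandF}), which freezes the $t$-th coordinate, so that $(S^N_t)'\circ\mathbf{drop}^N_{t,t+2^{-N}}$ is genuinely $\mathbf{drop}^N_{t,t+2^{-N}}$--measurable and can be pulled out, after which Proposition~\ref{prop:BMcondE} reduces the claim to the analogous identity — weighted by the $\mathbb{Q}^N$-mass sitting on the frozen coordinate — which the choice of weights on $[\alpha,\beta]^N$ must arrange.

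The main obstacle is precisely this $\mathbf{drop}$ step. Proposition~\ref{prop:BMcondE} forces $E^{\mathbf{drop}^N_{s,t}}(X)$ to vanish at every $\omega'\in B^N_s$ with $\omega'(s)=1$ on which $\mathbb{Q}^N_s$ is positive, because $\mathbf{drop}^N_{s,t}$ never outputs a configuration that is $1$ in coordinate $s$; but $(S^N_s)'$ is strictly positive (this is where $\mu>\sigma-1$ enters), so the measures on the times of $[\alpha,\beta]^N$ must be concentrated on the configuration produced by $\mathbf{drop}$ — i.e.\ the forgotten coordinate has to be a null coordinate under $\mathbb{Q}^N$, exactly the requirement anticipated in the first paragraph. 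The delicate part is to check that this forced choice on $[\alpha,\beta]^N$ is simultaneously compatible with the risk--neutral product structure on the remembered times, with null--preservation across the boundary of $[\alpha,\beta]^N$, and with the one--step identity at every arrow involved; Proposition~\ref{prop:fulldrop} is the bookkeeping device that should carry this consistency check through, and once it does, assembling the one--step identities by the tower property completes the proof.
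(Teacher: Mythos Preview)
Your construction and overall strategy coincide with the paper's: keep the underlying measurable spaces and the $\mathbf{full}/\mathbf{drop}$ arrows of $\mathbf{Drop}^N_{\alpha,\beta}$, replace only the measures by product measures $\mathbb{Q}^N_t$ built from one--step weights, put the classical risk--neutral weight $\tfrac12-2^{N/2-1}(\mu-r)/\sigma$ on the remembered coordinates, and force the forgotten coordinate to be $\mathbb{Q}^N$--null (weight $0$ on the value $1$). The paper packages the one--step verification as the equivalence in Theorem~\ref{thm:iiffMart} and then solves it separately in the $\mathbf{full}$ and $\mathbf{drop}$ cases (Propositions~\ref{prop:fullQ} and~\ref{prop:dropQ}); your direct computation via Propositions~\ref{prop:BMcondE}, \ref{prop:xiPr}, \ref{prop:EfNStNxiN} and the tower property $E^f\circ E^g=E^{f\circ g}$ is the same argument unpacked.

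One point needs correcting. Your appeal to the note after Definition~\ref{defn:filB} to conclude that ``every map between the $\mathbb{Q}^N$-spaces is null--preserving'' does not go through: that note relies on the target space having no nonempty null sets, which fails here precisely because the forgotten coordinates carry $q^N_s\in\{0,1\}$ and hence $\mathbb{Q}^N_t$ has genuine null singletons whenever $(0,t]^N$ meets $[\alpha,\beta]^N$. Null--preservation of $\mathbf{full}^N_{t,t+2^{-N}}$ and $\mathbf{drop}^N_{t,t+2^{-N}}$ with respect to $\mathbb{Q}^N$ must therefore be checked by hand; the paper does this in Proposition~\ref{prop:CwllDef}, and the check is short (for $\mathbf{full}$ it is measure--preserving by Assumption~\ref{asm:asmQ}(2); for $\mathbf{drop}$ the only new null set on the target side is $\{\omega'1\}$, whose preimage is empty). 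You essentially acknowledge this in your final paragraph, so just drop the earlier sentence invoking the note and insert the explicit verification there.
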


First,
we examine what form the probability measure
$
\mathbb{Q}^N_t
	:
\mathcal{F}^N_t
	\to
[0,1]
$
takes
when
$
\mathcal{C}^N(t)
	=
(
B^N_t,
\mathcal{F}^N_t,
\mathbb{Q}^N_t
)
$
for a risk-neutral filtration
$\mathcal{C}^N$,
in general.

\begin{thm}
\label{thm:iiffMart}
A stochastic process
$(S^N_t)'$
is a
\textblue{
$\mathcal{C}^N$-martingale
}
if and only if
the following equation holds
for every
$
t \in [0,\infty)^N
$
and
$
\omega \in B^N_t
$.
\begin{equation*}
\mathbb{Q}^N_t(
	\{\omega\}
)
	=
c_1 \, \mathbb{Q}^N_{t+2^{-N}}(I^N_{t}(1,\omega))
	+
c_0 \, \mathbb{Q}^N_{t+2^{-N}}(I^N_{t}(0,\omega))
\end{equation*}
where
\begin{equation*}
\textred{c_1}
	:=
\frac{1 + 2^{-N} \mu + 2^{-\frac{N}{2}} \sigma}{1 + 2^{-N} r} ,
\quad
\textred{c_0}
	:=
\frac{1 + 2^{-N} \mu - 2^{-\frac{N}{2}} \sigma}{1 + 2^{-N} r} .
\end{equation*}

\end{thm}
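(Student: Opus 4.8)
The plan is to convert the martingale condition of Definition~\ref{defn:riskNeutralFil} into a one-step recursion and then evaluate the conditional expectation by Proposition~\ref{prop:BMcondE} applied to the measures $\mathbb{Q}^N_t$. First I would reduce to unit time steps. Since $[0,\infty)^N$ is linearly ordered with successor $t\mapsto t+2^{-N}$, every arrow $\iota^N_{s,t}$ of $\mathcal{T}^N$ is a finite composite of unit-step arrows, so by functoriality of $\mathcal{C}^N$ the map $\mathcal{C}^N(\iota^N_{s,t})$ is the corresponding composite of the $\mathcal{C}^N(\iota^N_{u,u+2^{-N}})$; and the defining identity (\ref{eq:condExp}) at once gives the tower rule $E^{g\circ h}(X)=E^{g}(E^{h}(X))$ for composable null-preserving maps. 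Hence $(S^N_t)'$ is a $\mathcal{C}^N$-martingale if and only if
\[
E^{\mathcal{C}^N(\iota^N_{t,t+2^{-N}})}\big((S^N_{t+2^{-N}})'\big)=(S^N_t)'
\qquad\text{for all }t\in[0,\infty)^N .
\]
Because $U\circ\mathcal{C}^N=U\circ\mathcal{B}^N$, the underlying function of $\mathcal{C}^N(\iota^N_{t,t+2^{-N}})$ is precisely $f^N_{t,t+2^{-N}}$, so $(f^N_{t,t+2^{-N}})^{-1}(\omega)$ and its partition into $I^N_t(1,\omega)$ and $I^N_t(0,\omega)$ from Definition~\ref{defn:Itj} are available.

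Next I would unwind one step of the discounted stock. From the recursions for $S^N$ and $b^N$ in Definition~\ref{defn:stockAndBond} together with Definition~\ref{defn:discountStock} one gets, for $\omega'\in B^N_{t+2^{-N}}$,
\[
(S^N_{t+2^{-N}})'(\omega')=\big((S^N_t)'\circ f^N_{t,t+2^{-N}}\big)(\omega')\cdot\frac{1+2^{-N}\mu+2^{-\frac{N}{2}}\sigma\,\xi^N_{t+2^{-N}}(\omega')}{1+2^{-N}r},
\]
and since $\xi^N_{t+2^{-N}}$ is $1$ on $I^N_t(1,\omega)$ and $-1$ on $I^N_t(0,\omega)$, that last factor equals $c_1$ on $I^N_t(1,\omega)$ and $c_0$ on $I^N_t(0,\omega)$. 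Feeding this into Proposition~\ref{prop:BMcondE} — which holds verbatim for $\mathbb{Q}^N_t,\mathbb{Q}^N_{t+2^{-N}}$, being just (\ref{eq:condExp}) with $A=\{\omega\}$ — yields, for every $\omega\in B^N_t$,
\[
E^{f^N_{t,t+2^{-N}}}\big((S^N_{t+2^{-N}})'\big)(\omega)\,\mathbb{Q}^N_t(\{\omega\})=(S^N_t)'(\omega)\Big(c_1\,\mathbb{Q}^N_{t+2^{-N}}\big(I^N_t(1,\omega)\big)+c_0\,\mathbb{Q}^N_{t+2^{-N}}\big(I^N_t(0,\omega)\big)\Big).
\]
On the other hand, $A=\{\omega\}$ in (\ref{eq:condExp}) turns the one-step martingale identity into $E^{f^N_{t,t+2^{-N}}}((S^N_{t+2^{-N}})')(\omega)\,\mathbb{Q}^N_t(\{\omega\})=(S^N_t)'(\omega)\,\mathbb{Q}^N_t(\{\omega\})$. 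Comparing the two displays and cancelling $(S^N_t)'(\omega)$ — legitimate because the stock, hence the discounted stock, is strictly positive (cf.\ the remark following Definition~\ref{defn:stockAndBond}) — gives exactly the claimed equation; running the chain backwards, together with the tower rule and the factorization into unit steps, gives the converse.

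The step needing the most care is the first one: making the reduction to one-step conditions airtight in this functorial, null-preserving setting — i.e.\ the tower rule $E^{g\circ h}=E^{g}\circ E^{h}$ and the compatibility of the measures $\mathbb{Q}^N_t$ of $\mathcal{C}^N$ with the maps $f^N$ — and, throughout, staying in the $\cdot\,\mathbb{Q}^N_t(\{\omega\})$-multiplied form of (\ref{eq:condExp}), which is the version that holds for \emph{every} $\omega$ (not merely $\mathbb{Q}^N_t$-a.e.) and therefore delivers the conclusion in the pointwise form stated. The remaining computations are the routine ones above.
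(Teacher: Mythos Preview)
Your argument is correct and follows the same computational route as the paper: apply Proposition~\ref{prop:BMcondE} (with the $\mathbb{Q}^N$-measures) to $E^{\mathcal{C}^N(\iota^N_{t,t+2^{-N}})}\big((S^N_{t+2^{-N}})'\big)(\omega)\,\mathbb{Q}^N_t(\{\omega\})$, expand $(S^N_{t+2^{-N}})'$ via the recursions of Definition~\ref{defn:stockAndBond}, split the sum over $(f^N_{t,t+2^{-N}})^{-1}(\omega)$ into $I^N_t(1,\omega)$ and $I^N_t(0,\omega)$, and cancel the strictly positive factor $(S^N_t)'(\omega)$.

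The one place where you go further than the paper is your opening reduction: the paper's proof treats only the one-step identity $(S^N_t)'=E^{\mathcal{C}^N(\iota^N_{t,t+2^{-N}})}\big((S^N_{t+2^{-N}})'\big)$ and silently equates this with the full $\mathcal{C}^N$-martingale property of Definition~\ref{defn:riskNeutralFil}. You make that passage explicit via functoriality of $\mathcal{C}^N$ and the tower rule $E^{g\circ h}=E^{g}\circ E^{h}$, which is the honest justification. Your care about working in the $\mathbb{Q}^N_t(\{\omega\})$-multiplied form of~(\ref{eq:condExp}) is also well placed: it makes the displayed identity hold for \emph{every} $\omega$, and the case $\mathbb{Q}^N_t(\{\omega\})=0$ is then automatic since null-preservation of $f^N_{t,t+2^{-N}}$ under $\mathbb{Q}^N$ forces $\mathbb{Q}^N_{t+2^{-N}}\big(I^N_t(j,\omega)\big)=0$ for $j=0,1$.
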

\begin{proof}
Let
$\omega \in B^N_t$.
Then,
by Proposition \ref{prop:BMcondE},
we have
\begin{align*}
&
E^{\mathcal{C}^N(\iota^N_{t,t+2^{-N}})}(
	(S^N_{t+2^{-N}})'
)
(\omega)
\mathbb{Q}^N_{t}(\{\omega\})
	\\=&
\sum_{
	\omega' \in
	(f^N_{t,t+2^{-N}})^{-1}(\omega)
}
	(S^N_{t+2^{-N}})'(\omega')
	\mathbb{Q}^N_{t+2^{-N}}(\{\omega'\})
	\\=&
\sum_{
	\omega' \in
	(f^N_{t,t+2^{-N}})^{-1}(\omega)
}
	(b^N_{t+2^{-N}})^{-1}(\omega')
	(S^N_t \circ f^N_{t,t+2^{-N}}(\omega')
	(1 + 2^{-N}\mu + 2^{-\frac{N}{2}}\sigma \xi^N_{t+2^{-N}}(\omega'))
	\mathbb{Q}^N_{t+2^{-N}}(\{\omega'\})
	\\=&
\sum_{
	\omega' \in
	(f^N_{t,t+2^{-N}})^{-1}(\omega)
}
	(1+ 2^{-N}r)^{-(t+2^{-N})}
	S^N_{t} (\omega)
	(1 + 2^{-N}\mu + 2^{-\frac{N}{2}}\sigma \xi^N_{t+2^{-N}}(\omega'))
	\mathbb{Q}^N_{t+2^{-N}}(\{\omega'\})
	\\=&
(S^N_t)'(\omega)
\sum_{
	\omega' \in
	(f^N_{t,t+2^{-N}})^{-1}(\omega)
}
	\frac{
		1 + 2^{-N}\mu + 2^{-\frac{N}{2}}\sigma \xi^N_{t+2^{-N}}(\omega')
	}{
		1+ 2^{-N}r
	}
	\mathbb{Q}^N_{t+2^{-N}}(\{\omega'\}) .
\end{align*}
\noindent
Therefore,
the condition
$
(S^N_t)'
	=
E^{\mathcal{C}^N(\iota^N_{t,t+2^{-N}})}(
	(S^N_{t+2^{-N}})'
)
$
is equivalent to
\begin{align*}
\mathbb{Q}^N_t(\{\omega\})
	&=
\sum_{
	\omega' \in I^N_t(1, \omega)
}
	\!\!
	\frac{
		1 + 2^{-N}\mu + 2^{-\frac{N}{2}}\sigma
	}{
		1+ 2^{-N}r
	}
	\mathbb{Q}^N_{t+2^{-N}}(\{\omega'\})
+
\sum_{
	\omega' \in I^N_t(0, \omega)
}
	\!\!
	\frac{
		1 + 2^{-N}\mu - 2^{-\frac{N}{2}}\sigma
	}{
		1+ 2^{-N}r
	}
	\mathbb{Q}^N_{t+2^{-N}}(\{\omega'\})
	\\&=
c_1 \,
	\mathbb{Q}^N_{t+2^{-N}}
	(I^N_t(1, \omega))
+
c_0 \,
	\mathbb{Q}^N_{t+2^{-N}}
	(I^N_t(0, \omega)) .
\end{align*}

\end{proof}

\begin{defn}
For
$
\omega
	\in
B^N_t
$
and
$
d \in \{0, 1\}
$,
$
(\omega d)
	\in
B^N_{t+2^{-N}}
$
is an element
satisfying
\begin{equation*}
(\omega d) (s)
	:= \begin{cases}
\omega(s)
	\quad &
(
s \le t
)
\\
d
	\quad &
(
s = t+2^{-N}
)
\end{cases}
\end{equation*}
for any
$
s \in
(0, t+2^{-N}]^N
$.

Unless there is confusion,
we will omit the parentheses in
$
((\omega d_1) d_2)
$
and write
$
\omega d_1 d_2
$.
\end{defn}

In order to determine more detail of
$\mathcal{C}$,
we need the following condition for
$\mathbb{Q}^N_t$.

\begin{prop}
\label{prop:Qcond}
The following conditions for
$
\mathbb{Q}^N_t
$
are equivalent.
\begin{enumerate}
\item
For all
$
t \in [0,\infty)^N
$
and
$
\omega \in B^N_t
$,
\begin{equation}
\mathbb{Q}^N_{t+2^{-N}}(\{\omega 0, \omega 1\})
	=
\mathbb{Q}^N_t(\{\omega\}) .
\end{equation}

\item
For all
$
t \in [0,\infty)^N
$,
$\mathbf{full}^N_{t,t+2^{-N}}$
is measure-preserving
w.r.t.
$\mathbb{Q}^N_t$,
that is,
\begin{equation}
\mathbb{Q}^N_{t}
	=
\mathbb{Q}^N_{t+2^{-N}}
	\circ
(\mathbf{full}^N_{t,t+2^{-N}})^{-1} .
\end{equation}

\item
There exists a sequence of functions
$
\{
	q_t
		:
	B^N_t \to [0,1]
\}_{t \in (0,\infty)^N}
$
satisfying
the following conditions 
for every
$
t \in (0, \infty)^N
$
and
$
\omega
	\in
B^N_t
$,
	\begin{enumerate}
	\item
$
\mathbb{Q}^N_t
(\{\omega\})
	=
\prod_{
	s \in (0, t]^N
}
	q_s(\omega \mid_{(0,s]^N}) ,
$

	\item
$
	q_{t+2^{-N}} (\omega 0)
+
	q_{t+2^{-N}} (\omega 1)
=
	1 .
$
	\end{enumerate}

\end{enumerate}
\end{prop}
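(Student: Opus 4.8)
The statement is a cycle of equivalences $(1)\Leftrightarrow(2)\Leftrightarrow(3)$, so I would prove it in the order $(1)\Leftrightarrow(2)$ first, then $(1)\Leftrightarrow(3)$, and the reader can close the triangle. The key observation underlying everything is that, because $\mathcal{F}^N_t=2^{B^N_t}$ is the full powerset, $\mathbf{full}^N_{t,t+2^{-N}}$ is the restriction map $\omega\mapsto\omega\mid_{(0,t]^N}$, and its fiber over $\omega\in B^N_t$ is exactly the two-point set $\{\omega 0,\omega 1\}$. Hence $(\mathbf{full}^N_{t,t+2^{-N}})^{-1}(\{\omega\})=\{\omega 0,\omega 1\}$, and $(\mathbf{full}^N_{t,t+2^{-N}})^{-1}(A)=\bigcup_{\omega\in A}\{\omega 0,\omega 1\}$ for any $A\subseteq B^N_t$.

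For $(1)\Leftrightarrow(2)$: the pushforward measure $\mathbb{Q}^N_{t+2^{-N}}\circ(\mathbf{full}^N_{t,t+2^{-N}})^{-1}$ evaluated on a singleton $\{\omega\}$ is $\mathbb{Q}^N_{t+2^{-N}}(\{\omega 0,\omega 1\})$ by the fiber computation above, so condition (1) is literally the statement that this pushforward agrees with $\mathbb{Q}^N_t$ on singletons; since both are measures on a countable (powerset) $\sigma$-field, agreement on singletons is equivalent to agreement as measures, which is (2). For $(1)\Rightarrow(3)$: I would construct $q_s$ by the obvious conditional-probability recipe — for $s=t+2^{-N}$ and $\eta\in B^N_{t+2^{-N}}$ with $\eta=\omega d$, set $q_{t+2^{-N}}(\omega d):=\mathbb{Q}^N_{t+2^{-N}}(\{\omega d\})/\mathbb{Q}^N_t(\{\omega\})$ when $\mathbb{Q}^N_t(\{\omega\})\neq 0$ and (say) $q_{t+2^{-N}}(\omega d):=\tfrac12$ otherwise. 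Then (3b) is immediate from (1) (in the nonzero case; in the degenerate case it holds by the $\tfrac12$ choice, and one checks the degenerate case propagates consistently), and (3a) follows by an induction on the length of $(0,t]^N$, telescoping the product of conditional ratios — this is the standard "chain rule for finite stochastic processes" argument. For $(3)\Rightarrow(1)$: plug (3a) into both sides; the left side is $\sum_{d\in\{0,1\}}\prod_{s\in(0,t+2^{-N}]^N}q_s((\omega d)\mid_{(0,s]^N})$, factor out the common product over $s\in(0,t]^N$ (which equals $\mathbb{Q}^N_t(\{\omega\})$ by (3a) applied at time $t$), and what remains is $\sum_{d}q_{t+2^{-N}}(\omega d)=1$ by (3b).

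The only genuinely delicate point is bookkeeping around null sets in the $(1)\Rightarrow(3)$ direction: the normalization defining $q_{t+2^{-N}}$ breaks where $\mathbb{Q}^N_t(\{\omega\})=0$, and one must check that the arbitrary choice made there does not corrupt (3a) — but it cannot, because if $\mathbb{Q}^N_t(\{\omega\})=0$ then by (1) both $\mathbb{Q}^N_{t+2^{-N}}(\{\omega 0\})$ and $\mathbb{Q}^N_{t+2^{-N}}(\{\omega 1\})$ vanish as well, so every longer singleton extending $\omega$ already has $\mathbb{Q}$-mass zero, and the product formula (3a) holds trivially ($0=0$) regardless of the value assigned to $q$ on that branch. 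Everything else is routine: the fiber identity makes $(1)\Leftrightarrow(2)$ a one-line measure-theoretic remark, and $(1)\Leftrightarrow(3)$ is a finite induction with no analytic content since all $\sigma$-fields here are finite powersets.
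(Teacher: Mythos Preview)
Your proposal is correct. The paper actually states Proposition~\ref{prop:Qcond} without proof and immediately proceeds to adopt condition~(3) as Assumption~\ref{asm:asmQ}, so there is no argument in the paper to compare against; your approach---identifying the fiber $(\mathbf{full}^N_{t,t+2^{-N}})^{-1}(\{\omega\})=\{\omega 0,\omega 1\}$ to get $(1)\Leftrightarrow(2)$ on singletons (hence on all of $\mathcal{F}^N_t=2^{B^N_t}$), defining $q_{t+2^{-N}}(\omega d)$ as the conditional ratio and telescoping by induction on $t$ for $(1)\Rightarrow(3)$, and factoring the product and using $(3\mathrm{b})$ for $(3)\Rightarrow(1)$, with the null-branch bookkeeping handled exactly as you describe---is the natural one and would serve perfectly well as the omitted proof.
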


In the following discussion, 
we assume the following assumption
which is the condition
(3) of Proposition \ref{prop:Qcond}.


\begin{asm}
\label{asm:asmQ}
Suppose that
there exists a sequence of functions
$
\{
	q_t
		:
	B^N_t \to [0,1]
\}_{t \in (0,\infty)^N}
$
satisfying
the following conditions 
for every
$
t \in (0, \infty)^N
$
and
$
\omega
	\in
B^N_t
$,
\begin{enumerate}
\item
$
\mathbb{Q}^N_t
(\{\omega\})
	=
\prod_{
	s \in (0, t]^N
}
	q_s(\omega \mid_{(0,s]^N}) ,
$

\item
$
	q_{t+2^{-N}} (\omega 0)
+
	q_{t+2^{-N}} (\omega 1)
=
	1 .
$

\end{enumerate}

\end{asm}

In the rest of this section,
we assume 
Assumption \ref{asm:asmQ},
and then
will determine
the risk-neutral filtration
$\mathcal{C}^N$
by calculating
$
\{
	q_t
\}_{t \in (0,\infty)^N}
$.

\begin{lem}
\label{lem:qbin}
Let
$c_1$
and
$c_0$
are constants defined in
Theorem \ref{thm:iiffMart}.
Then for any
$
x
	\in
\mathbb{R}
$
we have
\[
1
	=
c_1 x + c_0 (1-x)
		\quad \Longleftrightarrow \quad
x
	=
\frac{1}{2}
	-
2^{\frac{N}{2}-1}
\frac{
	\mu - r
}{
	\sigma
}
\quad \textrm{and} \quad
1 - x
	=
\frac{1}{2}
	+
2^{\frac{N}{2}-1}
\frac{
	\mu - r
}{
	\sigma
} .
\]
\end{lem}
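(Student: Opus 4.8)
The plan is to treat this as a direct computation: the equation $1 = c_1 x + c_0(1-x)$ is affine in $x$, hence has at most one solution, and we simply find it. First I would substitute the definitions of $c_1$ and $c_0$ from Theorem~\ref{thm:iiffMart} and clear the common denominator $1 + 2^{-N}r$, which is strictly positive (hence nonzero) by the standing hypothesis $r > -1$ together with $0 < 2^{-N} \le 1$. Writing $a := 1 + 2^{-N}\mu$ and $b := 2^{-\frac{N}{2}}\sigma$ for brevity, the numerator on the right-hand side collapses by the elementary identity $(a+b)x + (a-b)(1-x) = a + b(2x-1)$, so the equation reduces to $1 + 2^{-N}r = a + b(2x-1)$, i.e.\ $b(2x-1) = 2^{-N}(r-\mu)$.

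Next I would solve for $x$. Since $\sigma > 0$ we have $b = 2^{-\frac{N}{2}}\sigma \neq 0$, so the last equation is genuinely linear in $x$ and determines it uniquely; dividing through by $2b$ and rearranging gives the closed form displayed for $x$, and then $1-x$ is obtained by complementation. This proves the forward implication. The reverse implication is the same computation read backwards: substituting the displayed value of $x$ into $c_1 x + c_0(1-x)$ and simplifying returns $1$, so nothing new is needed there.

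I do not expect a genuine obstacle here; the statement is a one-line linear-algebra fact specialised to the concrete constants $c_1, c_0$, and the work is just bookkeeping with the factors $2^{-N}$ and $2^{-\frac{N}{2}}$. The only point worth flagging explicitly is that $\sigma > 0$ is what makes the coefficient of $x$ non-vanishing, so that the martingale constraint $1 = c_1 x + c_0(1-x)$ \emph{pins down} $x$ rather than being vacuous or over-determined; this uniqueness is precisely what will allow us, in the subsequent lemmas, to read off the values $q_{t+2^{-N}}(\omega 1)$ and $q_{t+2^{-N}}(\omega 0)$ and thereby construct the risk-neutral filtration $\mathcal{C}^N$ unambiguously.
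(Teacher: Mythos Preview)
The paper states this lemma without proof, treating it as an immediate computation; your proposal is exactly the natural argument one would supply, and it is correct. Your observation that $\sigma>0$ is what makes $c_1\neq c_0$ (so that the affine equation genuinely determines $x$) is the only non-bookkeeping point, and it is well placed.
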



\begin{prop}
\label{prop:fullQ}
For
$
t \in (0, \infty)^N
$
if
$
f^N_{t,t+2^{-N}}
	=
\mathbf{full}^N_{t,t+2^{-N}}
$,
then
for
$
\omega
	\in
B^N_{t}
$
such that
$
\mathbb{Q}^N_t(\{
	\omega
\})
	\ne 0
$,
the following holds.
\[
q_{t+2^{-N}}(\omega 1)
	=
\frac{1}{2}
	-
2^{\frac{N}{2}-1}
\frac{\mu - r}{\sigma} ,
	\; \quad \;
q_{t+2^{-N}}(\omega 0)
	=
\frac{1}{2}
	+
2^{\frac{N}{2}-1}
\frac{\mu - r}{\sigma} .
\]

\end{prop}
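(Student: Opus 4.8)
The plan is to specialize the martingale characterization of Theorem~\ref{thm:iiffMart} to the step in which the transition arrow is $\mathbf{full}^N_{t,t+2^{-N}}$, and then to invoke Lemma~\ref{lem:qbin}. Since $\mathcal{C}^N$ is risk-neutral, $(S^N_t)'$ is a $\mathcal{C}^N$-martingale, so Theorem~\ref{thm:iiffMart} gives, for the fixed $t$ and $\omega \in B^N_t$ in the statement,
\[
\mathbb{Q}^N_t(\{\omega\})
=
c_1\, \mathbb{Q}^N_{t+2^{-N}}(I^N_t(1,\omega))
+
c_0\, \mathbb{Q}^N_{t+2^{-N}}(I^N_t(0,\omega)).
\]
First I would unwind $I^N_t(j,\omega)$ in this case: because $f^N_{t,t+2^{-N}} = \mathbf{full}^N_{t,t+2^{-N}}$, the preimage $(f^N_{t,t+2^{-N}})^{-1}(\omega)$ is exactly the two one-step extensions $\{\omega 0, \omega 1\}$, so $I^N_t(1,\omega) = \{\omega 1\}$ and $I^N_t(0,\omega) = \{\omega 0\}$, and the displayed equation becomes $\mathbb{Q}^N_t(\{\omega\}) = c_1\, \mathbb{Q}^N_{t+2^{-N}}(\{\omega 1\}) + c_0\, \mathbb{Q}^N_{t+2^{-N}}(\{\omega 0\})$.

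Next I would apply Assumption~\ref{asm:asmQ}(1). Since $(\omega d)\mid_{(0,s]^N} = \omega\mid_{(0,s]^N}$ for all $s \in (0,t]^N$, the product for $\mathbb{Q}^N_{t+2^{-N}}(\{\omega d\})$ is the product for $\mathbb{Q}^N_t(\{\omega\})$ times the single extra factor $q_{t+2^{-N}}(\omega d)$; hence $\mathbb{Q}^N_{t+2^{-N}}(\{\omega d\}) = \mathbb{Q}^N_t(\{\omega\})\, q_{t+2^{-N}}(\omega d)$ for $d = 0,1$. Substituting and dividing through by $\mathbb{Q}^N_t(\{\omega\})$, which is nonzero by hypothesis, yields $1 = c_1\, q_{t+2^{-N}}(\omega 1) + c_0\, q_{t+2^{-N}}(\omega 0)$. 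By Assumption~\ref{asm:asmQ}(2), $q_{t+2^{-N}}(\omega 0) = 1 - q_{t+2^{-N}}(\omega 1)$, so with $x := q_{t+2^{-N}}(\omega 1)$ this is exactly the equation $1 = c_1 x + c_0(1-x)$ of Lemma~\ref{lem:qbin}. That lemma then delivers $q_{t+2^{-N}}(\omega 1) = \frac{1}{2} - 2^{\frac{N}{2}-1}\frac{\mu-r}{\sigma}$ and $q_{t+2^{-N}}(\omega 0) = \frac{1}{2} + 2^{\frac{N}{2}-1}\frac{\mu-r}{\sigma}$, which is the assertion.

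I do not expect a real obstacle; the proof is essentially a substitution. The only steps needing a moment of care are the identification $I^N_t(j,\omega) = \{\omega j\}$ for the full map, and checking that the surplus factor coming out of the product formula in Assumption~\ref{asm:asmQ}(1) is precisely $q_{t+2^{-N}}$ evaluated at the full length-$(t+2^{-N})$ word $\omega d$, not at a proper restriction of it. After that, the nonvanishing hypothesis on $\mathbb{Q}^N_t(\{\omega\})$ legitimizes the division and Lemma~\ref{lem:qbin} closes the argument.
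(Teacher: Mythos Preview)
Your proposal is correct and follows essentially the same route as the paper: identify $I^N_t(j,\omega)=\{\omega j\}$ for the full map, apply Theorem~\ref{thm:iiffMart}, use Assumption~\ref{asm:asmQ} to factor $\mathbb{Q}^N_{t+2^{-N}}(\{\omega d\}) = \mathbb{Q}^N_t(\{\omega\})\,q_{t+2^{-N}}(\omega d)$, divide by the nonzero $\mathbb{Q}^N_t(\{\omega\})$, and finish with Lemma~\ref{lem:qbin}. Your write-up is in fact slightly more explicit than the paper's in spelling out why the extra factor is $q_{t+2^{-N}}(\omega d)$ and in invoking Assumption~\ref{asm:asmQ}(2) to put the equation in the form required by Lemma~\ref{lem:qbin}.
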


\begin{figure}[h]
\center
\[
\xymatrix@C=60 pt@R=5 pt{
&
	\omega 1
\\
\\
	\omega
		\ar @{-} [ruu]
		\ar @{-} [rdd]
\\
\\
&
	\omega 0
\\\\
    B^N_t
        \ar @{}_{ \mathrel{ \rotatebox[origin=c]{90}{$\in$} } } @<+3pt> [dd]
&
    B^N_{t+2^{-N}}
        \ar @{->}_{\mathbf{full}^N_{t,t+2^{-N}}} [l]
        \ar @{}_{ \mathrel{ \rotatebox[origin=c]{90}{$\in$} } } @<+3pt> [dd]
\\\\
	\omega
&
	\omega d_{t+2^{-N}}
        \ar @{|->}_{\mathbf{full}^N_{t,t+2^{-N}}} [l]
}
\]
\caption{
$\mathbf{full}^N_{t+2^{-N},t}$
}
\label{fig:fullN}
\end{figure}

\begin{proof}
By observing
Figure \ref{fig:fullN},
we have
\[
(\mathbf{full}^N_{t,t+2^{-N}})^{-1}(\omega)
	=
\{ \omega 0, \omega 1 \} , 
	\quad
I^N_t(1, \omega) = \{ \omega 1 \} ,
	\quad
I^N_t(0, \omega) = \{ \omega 0 \} .
\]

\noindent
Then, by Theorem \ref{thm:iiffMart},
\[
\mathbb{Q}^N_t(\{\omega\})
	=
c_1 \mathbb{Q}^N_{t+2^{-N}}(I^N_t(1,\omega))
	+
c_0 \mathbb{Q}^N_{t+2^{-N}}(I^N_t(0,\omega))
	=
c_1 \mathbb{Q}^N_{t+2^{-N}}(\{\omega 1\})
	+
c_0 \mathbb{Q}^N_{t+2^{-N}}(\{\omega 0\}) .
\]
Since
\begin{equation*}
\mathbb{Q}^N_{t+2^{-N}}(\{\omega d_{t+2^{-N}}\})
	=
\mathbb{Q}^N_{t}(\{\omega\})
q_{t+2^{-N}}(\omega d_{t+2^{-N}})
\end{equation*}
by
Assumption \ref{asm:asmQ}
and
$
\mathbb{Q}^N_{t}(\{\omega\})
	\ne 0
$,
we have
\begin{equation*}
1 =
c_1 q_{t+2^{-N}}(\omega 1)
	+
c_0 q_{t+2^{-N}}(\omega 0) .
\end{equation*}
Hence,
by Lemma \ref{lem:qbin},
we obtain
\begin{equation*}
q_{t+2^{-N}}(\omega 1)
	=
\frac{1}{2}
	-
2^{\frac{N}{2}-1}
\frac{\mu - r}{\sigma} ,
\quad
q_{t+2^{-N}}(\omega 0)
	=
\frac{1}{2}
	+
2^{\frac{N}{2}-1}
\frac{\mu - r}{\sigma} .
\end{equation*}

\end{proof}

Note that the probability obtained in 
Proposition \ref{prop:fullQ}
does not depend on either
$\omega$
or
$t$ .

\begin{prop}
\label{prop:dropQ}
For
$t \in (0, \infty)^N$,
if
$
f^N_t
	=
\mathbf{drop}^N_{t,t+2^{-N}}
$,
then
for
$
\omega \in B^N_{t-2^{-N}}
$
such that
$
\mathbb{Q}^N_{t-2^{-N}}(\{\omega\})
	\ne 0
$,
the following holds.
\begin{align*}
q_{t}(\omega 1)
	&=
0 ,
\\
q_{t}(\omega 0)
	&=
1 ,
\\
q_{t+2^{-N}}(\omega 01)
	&=
\frac{1}{2}
	-
2^{\frac{N}{2}-1}
\frac{\mu - r}{\sigma} ,
	\\
q_{t+2^{-N}}(\omega 00)
	&=
\frac{1}{2}
	+
2^{\frac{N}{2}-1}
\frac{\mu - r}{\sigma} .
\end{align*}

\end{prop}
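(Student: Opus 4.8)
The plan is to extract the four values directly from the martingale characterisation in Theorem~\ref{thm:iiffMart} and the normalisation built into Assumption~\ref{asm:asmQ}, using the single structural fact that $\mathbf{drop}^N_{t,t+2^{-N}}(\omega')=\mathbf{full}^N_{t,t+2^{-N}}(\omega')\times\mathbb{1}_{(0,t)^N}$ takes the value $0$ at time $t$ for every $\omega'\in B^N_{t+2^{-N}}$. It is convenient to note first that part~(1) of Assumption~\ref{asm:asmQ} gives $\mathbb{Q}^N_t(\{\omega d_1\})=\mathbb{Q}^N_{t-2^{-N}}(\{\omega\})\,q_t(\omega d_1)$ and $\mathbb{Q}^N_{t+2^{-N}}(\{\omega d_1 d_2\})=\mathbb{Q}^N_{t-2^{-N}}(\{\omega\})\,q_t(\omega d_1)\,q_{t+2^{-N}}(\omega d_1 d_2)$.

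The first step is $q_t(\omega 1)=0$. Since $\mathbf{drop}^N_{t,t+2^{-N}}$ never outputs a path equal to $1$ at time $t$, the fibre $(f^N_{t,t+2^{-N}})^{-1}(\omega 1)$ is empty, so $I^N_t(0,\omega 1)=I^N_t(1,\omega 1)=\emptyset$, and Theorem~\ref{thm:iiffMart} applied at $\omega 1$ forces $\mathbb{Q}^N_t(\{\omega 1\})=0$. Combined with $\mathbb{Q}^N_t(\{\omega 1\})=\mathbb{Q}^N_{t-2^{-N}}(\{\omega\})\,q_t(\omega 1)$ and $\mathbb{Q}^N_{t-2^{-N}}(\{\omega\})\neq 0$ this gives $q_t(\omega 1)=0$, and then $q_t(\omega 0)=1$ by part~(2) of Assumption~\ref{asm:asmQ}. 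I would record two consequences for later use: $\mathbb{Q}^N_t(\{\omega 0\})=\mathbb{Q}^N_{t-2^{-N}}(\{\omega\})\neq 0$, and $\mathbb{Q}^N_{t+2^{-N}}(\{\omega 1 d\})=0$ for $d\in\{0,1\}$, since that quantity carries the factor $q_t(\omega 1)=0$.

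For the remaining two values I would compute the fibre over $\omega 0$: because $(\omega 0)(t)=0$, we get $(f^N_{t,t+2^{-N}})^{-1}(\omega 0)=\{\omega 00,\omega 01,\omega 10,\omega 11\}$, hence $I^N_t(1,\omega 0)=\{\omega 01,\omega 11\}$ and $I^N_t(0,\omega 0)=\{\omega 00,\omega 10\}$. Feeding this into Theorem~\ref{thm:iiffMart} at $\omega 0$, discarding the terms with $\mathbb{Q}^N_{t+2^{-N}}(\{\omega 11\})=\mathbb{Q}^N_{t+2^{-N}}(\{\omega 10\})=0$, and substituting $\mathbb{Q}^N_{t+2^{-N}}(\{\omega 0 d\})=\mathbb{Q}^N_t(\{\omega 0\})\,q_{t+2^{-N}}(\omega 0 d)$, one obtains $\mathbb{Q}^N_t(\{\omega 0\})=\mathbb{Q}^N_t(\{\omega 0\})\bigl(c_1 q_{t+2^{-N}}(\omega 01)+c_0 q_{t+2^{-N}}(\omega 00)\bigr)$. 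Cancelling the factor $\mathbb{Q}^N_t(\{\omega 0\})\neq 0$ and using $q_{t+2^{-N}}(\omega 00)+q_{t+2^{-N}}(\omega 01)=1$ from part~(2) of Assumption~\ref{asm:asmQ}, Lemma~\ref{lem:qbin} with $x=q_{t+2^{-N}}(\omega 01)$ delivers exactly the claimed formulas.

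The argument is pure bookkeeping once the fibres are identified, and that identification is the only delicate point: one must see that ``dropping'' time $t$ forces $q_t(\omega 1)=0$, and that this zero in turn kills the contribution of $\omega 10$ and $\omega 11$ to the fibre over $\omega 0$, collapsing the martingale equation at $\omega 0$ to the same two–point relation $1=c_1 x+c_0(1-x)$ already solved in Lemma~\ref{lem:qbin} (and in Proposition~\ref{prop:fullQ}). This also explains why only $q_{t+2^{-N}}(\omega 00)$ and $q_{t+2^{-N}}(\omega 01)$ are pinned down: the states $\omega 10$ and $\omega 11$ already carry zero $\mathbb{Q}^N$–mass, so $q_{t+2^{-N}}(\omega 10)$ and $q_{t+2^{-N}}(\omega 11)$ remain unconstrained.
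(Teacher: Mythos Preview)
Your argument is correct and follows essentially the same route as the paper's proof: identify the empty fibre over $\omega 1$ to force $q_t(\omega 1)=0$ via Theorem~\ref{thm:iiffMart}, then use the fibre $\{\omega 00,\omega 01,\omega 10,\omega 11\}$ over $\omega 0$ together with the vanishing of the $\omega 1d$ terms to reduce to $1=c_1 q_{t+2^{-N}}(\omega 01)+c_0 q_{t+2^{-N}}(\omega 00)$ and invoke Lemma~\ref{lem:qbin}. The only cosmetic difference is that the paper divides by $\mathbb{Q}^N_{t-2^{-N}}(\{\omega\})$ and then substitutes $q_t(\omega 1)=0$, whereas you first discard the zero-mass terms and cancel $\mathbb{Q}^N_t(\{\omega 0\})$; the logic is identical.
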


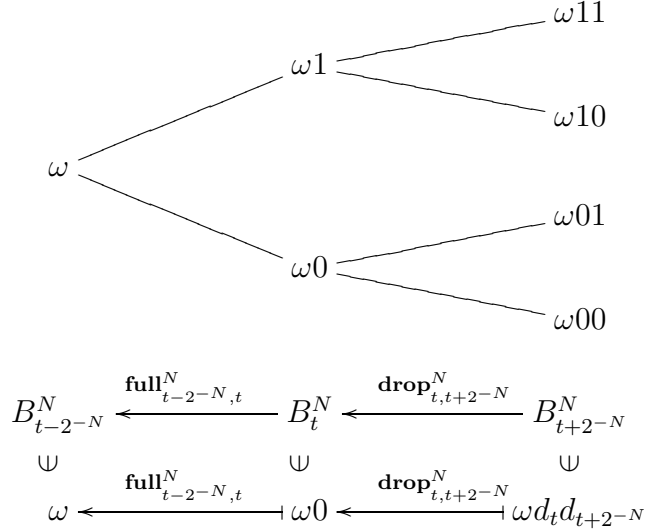
\begin{figure}[h]
\center
\[
\xymatrix@C=60 pt@R=5 pt{
&&
	\omega 11
\\
&
	\omega 1
		\ar @{-} [ru]
		\ar @{-} [rd]
\\
&&
	\omega 10
\\
	\omega
		\ar @{-} [ruu]
		\ar @{-} [rdd]
\\
&&
	\omega 01
\\
&
	\omega 0
		\ar @{-} [ru]
		\ar @{-} [rd]
\\
&&
	\omega 00
\\\\
	B^N_{t-2^{-N}}
        \ar @{}_{ \mathrel{ \rotatebox[origin=c]{90}{$\in$} } } @<+3pt> [dd]
&
    B^N_t
        \ar @{->}_{\mathbf{full}^N_{t-2^{-N},t}} [l]
        \ar @{}_{ \mathrel{ \rotatebox[origin=c]{90}{$\in$} } } @<+3pt> [dd]
&
    B^N_{t+2^{-N}}
        \ar @{->}_{\mathbf{drop}^N_{t,t+2^{-N}}} [l]
        \ar @{}_{ \mathrel{ \rotatebox[origin=c]{90}{$\in$} } } @<+3pt> [dd]
\\\\
	\omega
&
	\omega \textblue{0}
        \ar @{|->}_{\mathbf{full}^N_{t-2^{-N},t}} [l]
&
	\omega d_t d_{t+2^{-N}}
        \ar @{|->}_{\mathbf{drop}^N_{t,t+2^{-N}}} [l]
}
\]
\caption{
$\mathbf{drop}^N_{t,t+2^{-N}}$
followed by
$\mathbf{full}^N_{t+2^{-N},t}$
}
\label{fig:fullNdropN}
\end{figure}

\begin{proof}
By observing
Figure \ref{fig:fullNdropN},
we have
\begin{align*}
(\mathbf{drop}^N_{t,t+2^{-N}})^{-1}(\omega 1) &= \emptyset ,
\\
I^N_t(1, \omega 1) &=
I^N_t(0, \omega 1) = \emptyset ,
\\
(\mathbf{drop}^N_{t,t+2^{-N}})^{-1}(\omega 0) 
	&= 
\{ \omega 00, \omega 01, \omega 10, \omega 11 \} ,
\\
I^N_t(1, \omega 0) &= \{ \omega 01, \omega 11 \} ,
\\
I^N_t(0, \omega 0) &= \{ \omega 00, \omega 10 \} .
\end{align*}
Then, by Theorem \ref{thm:iiffMart},
\[
\mathbb{Q}^N_t(\{\omega 1\})
	=
c_1 \mathbb{Q}^N_{t+2^{-N}}(I^N_t(1,\omega 1))
	+
c_0 \mathbb{Q}^N_{t+2^{-N}}(I^N_t(0,\omega 1))
	=
0 .
\]
Now, since
$
\mathbb{Q}^N_{t}(\{\omega d_{t}\})
	=
\mathbb{Q}^N_{t-2^{-N}}(\{\omega\})
q_{t}(\omega d_{t}) 
$
by Assumption \ref{asm:asmQ},
and
$
\mathbb{Q}^N_{t-2^{-N}}(\{\omega\})
	\ne 0
$,
we have
\begin{equation*}
q_{t}(\omega 1) = 0,
\quad
q_{t}(\omega 0) 
	=
1 - q_{t}(\omega 1) 
	=
1 .
\end{equation*}
Next,
again by
Theorem \ref{thm:iiffMart},
\begin{align*}
\mathbb{Q}^N_t(\{\omega 0\})
	&=
c_1 \mathbb{Q}^N_{t+2^{-N}}(I^N_t(1,\omega 0))
	+
c_0 \mathbb{Q}^N_{t+2^{-N}}(I^N_t(0,\omega 0))
	\\&=
c_1 \big(
	\mathbb{Q}^N_{t+2^{-N}}(\{\omega 01\})
		+
	\mathbb{Q}^N_{t+2^{-N}}(\{\omega 11\})
\big)
	\\&+
c_0 \big(
	\mathbb{Q}^N_{t+2^{-N}}(\{\omega 00\})
		+
	\mathbb{Q}^N_{t+2^{-N}}(\{\omega 10\})
\big) .
\end{align*}
By dividing both sides by
$
	\mathbb{Q}^N_{t-2^{-N}}(\{\omega\})
\ne
	0
$,
we obtain
\begin{align*}
q_t(\omega 0)
	&=
c_1 \big(
	q_t(\omega 0) q_{t+2^{-N}}(\omega 01)
		+
	q_t(\omega 1) q_{t+2^{-N}}(\omega 11)
\big)
	\\&+
c_0 \big(
	q_t(\omega 0) q_{t+2^{-N}}(\omega 00)
		+
	q_t(\omega 1) q_{t+2^{-N}}(\omega 10)
\big) .
\end{align*}
Hence, since
$
q_{t}(\omega 1) = 0
$
and
$
q_{t}(\omega 0) = 1
$,
we get
\begin{equation*}
1
	=
c_1 q_{t+2^{-N}}(\omega 01)
	+
c_0 q_{t+2^{-N}}(\omega 00) .
\end{equation*}
Therefore, by
Lemma \ref{lem:qbin},
\begin{equation*}
q_{t+2^{-N}}(\omega 01)
	=
\frac{1}{2}
	-
2^{\frac{N}{2}-1}
\frac{\mu - r}{\sigma},
	\quad
q_{t+2^{-N}}(\omega 00)
	=
\frac{1}{2}
	+
2^{\frac{N}{2}-1}
\frac{\mu - r}{\sigma} .
\end{equation*}

\end{proof}

\begin{rem}
We have the following remarks for 
Figure \ref{fig:fullNdropN},
\begin{enumerate}
\item
Since the agent evaluates stock and bond along the function 
$\mathbf{drop}^N_{t,t+2^{-N}}$,
she can recognize only the nodes $\omega 0$, $\omega 01$ and $\omega 00$ 
and can not recognise the nodes $\omega 1$, $\omega 11$ and $\omega 10$. 
We interpret these nodes $\omega 1$, $\omega 11$ and $\omega 10$ as invisible. 

\item
The values
$
q_{t+2^{-N}}(\omega 11)
	\in [0, 1]
$
can be arbitrarily selected,
and
$q_{t+2^{-N}}(\omega 10)$
is computed by
$
1 - q_{t+2^{-N}}(\omega 10)
$.
That is, the probability measure
$\mathbb{Q}^N_{t+2^{-N}}$
is not determined uniquely,
so is not the risk-neutral filtration
$\mathcal{C}^N$.

\item
The probability measure
$\mathbb{Q}^N_{t}$
is not equivalent to the original measure
$\mathbb{P}^N_{t}$.
Therefore, it is not an EMM.

\end{enumerate}

\end{rem}

\begin{prop}
\label{prop:CwllDef}
Both
$\mathbf{full}^N_{t,t+2^{-N}}$
and
$\mathbf{drop}^N_{t,t+2^{-N}}$
are
null-preserving
with respect to
$
\mathbb{Q}^N_{t}
$
and
$
\mathbb{Q}^N_{t+2^{-N}}
$.

\end{prop}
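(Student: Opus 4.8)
The plan is to verify the defining condition $\mathbb{Q}^N_{t+2^{-N}}\circ f^{-1}\ll\mathbb{Q}^N_t$ separately for $f=\mathbf{full}^N_{t,t+2^{-N}}$ and for $f=\mathbf{drop}^N_{t,t+2^{-N}}$, both viewed as maps $\bar{B}^N_{t+2^{-N}}\to\bar{B}^N_t$. Since every $\sigma$-field here is the full power set of a finite set, this absolute continuity is equivalent to the pointwise statement: $\mathbb{Q}^N_t(\{\omega\})=0$ implies $\mathbb{Q}^N_{t+2^{-N}}(f^{-1}(\{\omega\}))=0$ for each $\omega\in B^N_t$. The only computation I need comes from Assumption \ref{asm:asmQ}: splitting off the last factor of $\mathbb{Q}^N_{t+2^{-N}}(\{\omega d\})=\prod_{s\in(0,t+2^{-N}]^N}q_s((\omega d)\mid_{(0,s]^N})$ gives $\mathbb{Q}^N_{t+2^{-N}}(\{\omega d\})=\mathbb{Q}^N_t(\{\omega\})\,q_{t+2^{-N}}(\omega d)$ for $d\in\{0,1\}$, and summing over $d$ together with Assumption \ref{asm:asmQ}(2) yields the key identity $\mathbb{Q}^N_{t+2^{-N}}(\{\omega 0,\omega 1\})=\mathbb{Q}^N_t(\{\omega\})$. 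The $\mathbf{full}$ case is then immediate: from Definition \ref{defn:twoCandF} one reads off $(\mathbf{full}^N_{t,t+2^{-N}})^{-1}(\{\omega\})=\{\omega 0,\omega 1\}$, so the preimage of $\{\omega\}$ carries exactly mass $\mathbb{Q}^N_t(\{\omega\})$; in particular $\mathbf{full}^N_{t,t+2^{-N}}$ is measure preserving, hence null-preserving. (The degenerate case $t=0$ is vacuous, $\mathbb{Q}^N_0$ being a point mass.)

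For the $\mathbf{drop}$ case, where the real content lies, I would first unwind Definition \ref{defn:twoCandF} to see that $\mathbf{drop}^N_{t,t+2^{-N}}$ is $\mathbf{full}^N_{t,t+2^{-N}}$ followed by the map that overwrites the time-$t$ coordinate by $0$ (this is exactly the picture in Figure \ref{fig:fullNdropN}). Hence its image is contained in $\{\omega'\in B^N_t:\omega'(t)=0\}$, so when $\omega'(t)=1$ the preimage is empty and nothing must be checked; and when $\omega'=\eta 0$ with $\eta\in B^N_{t-2^{-N}}$ one gets $(\mathbf{drop}^N_{t,t+2^{-N}})^{-1}(\{\eta 0\})=(\mathbf{full}^N_{t,t+2^{-N}})^{-1}(\{\eta 0,\eta 1\})=\{\eta 00,\eta 01,\eta 10,\eta 11\}$, whose $\mathbb{Q}^N_{t+2^{-N}}$-mass equals $\mathbb{Q}^N_t(\{\eta 0\})+\mathbb{Q}^N_t(\{\eta 1\})$ by applying the key identity above at the two nodes $\eta 0$ and $\eta 1$. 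Now I would invoke Proposition \ref{prop:dropQ}: in the situation at hand ($f^N_{t,t+2^{-N}}=\mathbf{drop}^N_{t,t+2^{-N}}$, which is when this function occurs as an arrow of $\mathcal{C}^N$), that proposition gives $q_t(\eta 1)=0$ whenever $\mathbb{Q}^N_{t-2^{-N}}(\{\eta\})\neq 0$, so in every case $\mathbb{Q}^N_t(\{\eta 1\})=\mathbb{Q}^N_{t-2^{-N}}(\{\eta\})\,q_t(\eta 1)=0$. Therefore $\mathbb{Q}^N_{t+2^{-N}}((\mathbf{drop}^N_{t,t+2^{-N}})^{-1}(\{\eta 0\}))=\mathbb{Q}^N_t(\{\eta 0\})$, and in particular the preimage of a $\mathbb{Q}^N_t$-null singleton is $\mathbb{Q}^N_{t+2^{-N}}$-null; so $\mathbf{drop}^N_{t,t+2^{-N}}$ is null-preserving (again in fact measure preserving onto the support of $\mathbb{Q}^N_t$). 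Combined with the composition law of Proposition \ref{prop:fulldrop} and the closure of null-preserving maps under composition (as $\Prob$ is a category), this is what certifies that $\mathcal{C}^N$ is a genuine functor $\mathcal{T}^N\to\Prob$, i.e.\ a filtration.

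I expect the main obstacle to be conceptual rather than computational: one must recognise that null-preservation of $\mathbf{drop}^N_{t,t+2^{-N}}$ really does need the measure data pinned down in Proposition \ref{prop:dropQ}, and would be \emph{false} for an unconstrained $\mathbb{Q}^N_t$. Before using that input the preimage of $\{\eta 0\}$ only reduces to mass $\mathbb{Q}^N_{t-2^{-N}}(\{\eta\})$, which can strictly exceed $\mathbb{Q}^N_t(\{\eta 0\})$; what rescues the statement is precisely that the ``invisible'' branch through $\eta 1$ carries no $\mathbb{Q}$-mass, i.e.\ the identity $q_t(\omega 1)=0$ from Proposition \ref{prop:dropQ}. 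Once that is fed in, the whole argument collapses to the two-line product manipulation coming from Assumption \ref{asm:asmQ}, together with the explicit description of the preimages in Figure \ref{fig:fullNdropN}.
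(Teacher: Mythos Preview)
Your proof is correct and follows the same route as the paper: derive $\mathbb{Q}^N_{t+2^{-N}}(\{\omega 0,\omega 1\})=\mathbb{Q}^N_t(\{\omega\})$ from Assumption~\ref{asm:asmQ} to handle $\mathbf{full}^N_{t,t+2^{-N}}$, and invoke $q_t(\eta 1)=0$ from Proposition~\ref{prop:dropQ} for $\mathbf{drop}^N_{t,t+2^{-N}}$. Your treatment of the drop case is in fact more complete than the paper's, which only explicitly checks the singleton $\{\eta 1\}$ (empty preimage) and leaves the case $\{\eta 0\}$ implicit; you close that gap by computing the preimage mass as $\mathbb{Q}^N_t(\{\eta 0\})+\mathbb{Q}^N_t(\{\eta 1\})=\mathbb{Q}^N_t(\{\eta 0\})$, thereby showing that $\mathbf{drop}^N_{t,t+2^{-N}}$ is actually measure-preserving on the support of $\mathbb{Q}^N_t$, not merely null-preserving.
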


\begin{proof}
Let
$
\omega
	\in
B^N_t
$.
Then, by
Assumption \ref{asm:asmQ},
\begin{equation*}
(
\mathbb{Q}^N_{t+2^{-N}}
	\circ
(\mathbf{full}^N_{t,t+2^{-N}})^{-1}
)
(\omega)
	=
\mathbb{Q}^N_{t+2^{-N}}
(\{
	\omega 1, \omega 0
\})
	=
\mathbb{Q}^N_{t}
(\omega) .
\end{equation*}
Hence,
$\mathbf{full}^N_{t,t+2^{-N}}$
is null-preserving.

Next,
consider the case
when
$
\mathbf{drop}^N_{t,t+2^{-N}}
$.
Then for
$
\omega'
	\in
B^N_{t-2^{-N}}
$,
by Proposition \ref{prop:dropQ},
we have
$
\mathbb{Q}^N_{t}
(\omega' 1)
	= 0
$.
On the other hand,
we get
\begin{equation*}
(
\mathbb{Q}^N_{t+2^{-N}}
	\circ
(\mathbf{drop}^N_{t,t+2^{-N}})^{-1}
)
(\omega' 1)
	=
\mathbb{Q}^N_{t+2^{-N}}
(\emptyset)
	=
0 .
\end{equation*}
Therefore,
$\mathbf{drop}^N_{t,t+2^{-N}}$
is also null-preserving.

\end{proof}

\begin{thm}
\label{thm:riskNeuDropFil}
There exists a risk-neutral filtration 
$ \mathcal{C}^N $
for 
the dropped filtration 
$ \mathbf{Drop}_{\alpha, \beta} $.
In this case, 
the probability measure 
$ \mathbb{Q}^N_t $
of the probability space 
$ \mathcal{C}^N(t) $
is not equivalent to
the probability measure
$ \mathbb{P}_{t} $
of 
$ \mathbf{Drop}_{\alpha, \beta}(t) $.
Therefore,
it is not an EMM.
In fact, 
the probability measure
$\mathbb{Q}^N_{t}$
is not uniquely determined. 
Similarly, the risk-neutral filtration
$\mathcal{C}^N$
is not uniquely determined.
\end{thm}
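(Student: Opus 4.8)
\textit{Proof sketch.}
The plan is to assemble $\mathcal{C}^N$ from the local data already recorded in Propositions \ref{prop:fullQ} and \ref{prop:dropQ}, to check that it is a well-defined functor into $\Prob$ lying over $\mathbf{Drop}^N_{\alpha,\beta}$ after the forgetful functor $U$, to verify the martingale property via Theorem \ref{thm:iiffMart}, and finally to read the non-equivalence and non-uniqueness off the construction.

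First I would define the sequence $\{q_t : B^N_t \to [0,1]\}_{t \in (0,\infty)^N}$ by recursion on $t$. For each one-step arrow $\iota^N_{t,t+2^{-N}}$ the filtration $\mathbf{Drop}^N_{\alpha,\beta}$ sends it to $\mathbf{full}^N_{t,t+2^{-N}}$ when $t \notin [\alpha,\beta]^N$ and to $\mathbf{drop}^N_{t,t+2^{-N}}$ when $t \in [\alpha,\beta]^N$; in the former case I set the values of $q_{t+2^{-N}}$ as dictated by Proposition \ref{prop:fullQ}, and in the latter case I set $q_t$ and the visible values of $q_{t+2^{-N}}$ as dictated by Proposition \ref{prop:dropQ}, while the invisible values $q_{t+2^{-N}}(\omega 11)$ of Figure \ref{fig:fullNdropN} are chosen freely in $[0,1]$, say $1/2$, with $q_{t+2^{-N}}(\omega 10) := 1 - q_{t+2^{-N}}(\omega 11)$. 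Then I put $\mathbb{Q}^N_t(\{\omega\}) := \prod_{s \in (0,t]^N} q_s(\omega \mid_{(0,s]^N})$, define $\mathcal{C}^N(t) := (B^N_t, \mathcal{F}^N_t, \mathbb{Q}^N_t)$, and take $\mathcal{C}^N(\iota^N_{s,t})$ to be the very same underlying map as $\mathbf{Drop}^N_{\alpha,\beta}(\iota^N_{s,t})$. Functoriality of these maps is Proposition \ref{prop:fulldrop}; that they are null-preserving with respect to the $\mathbb{Q}$'s is Proposition \ref{prop:CwllDef} for the one-step arrows and closure of null-preserving maps under composition in general; and $U \circ \mathcal{C}^N = U \circ \mathbf{Drop}^N_{\alpha,\beta}$ is immediate since the measurable spaces and maps coincide.

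Next I would check the martingale property. By Theorem \ref{thm:iiffMart}, $(S^N)'$ is a $\mathcal{C}^N$-martingale iff the equation $\mathbb{Q}^N_t(\{\omega\}) = c_1\,\mathbb{Q}^N_{t+2^{-N}}(I^N_t(1,\omega)) + c_0\,\mathbb{Q}^N_{t+2^{-N}}(I^N_t(0,\omega))$ holds for all $t$ and $\omega$; but Propositions \ref{prop:fullQ} and \ref{prop:dropQ} were obtained precisely by solving this equation through Lemma \ref{lem:qbin}, so the $q_t$ chosen above satisfy it by construction, and the general identity $E^{\mathcal{C}^N(\iota^N_{s,t})}((S^N_t)') = (S^N_s)'$ for arbitrary $s \le t$ then follows from the one-step case by the tower property of conditional expectation along composed null-preserving maps together with functoriality of $\mathcal{C}^N$ (the tower property being available from \cite{AR_2019}). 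For the non-equivalence, Proposition \ref{prop:dropQ} forces $q_t(\omega 1) = 0$ at any drop-time $t \in [\alpha,\beta]^N$, hence $\mathbb{Q}^N_t(\{\omega 1\}) = 0$, whereas $\mathbb{P}^N_t(\{\omega 1\}) > 0$ whenever $p^N_t \in (0,1)$, which we may assume by non-triviality at a drop time; thus $\mathbb{Q}^N_t$ and $\mathbb{P}^N_t$ are not mutually absolutely continuous and $\mathcal{C}^N$ is not an EMM. Non-uniqueness is visible already in the construction: the invisible-node values $q_{t+2^{-N}}(\omega 11)$ are genuine free parameters, untouched by any instance of Theorem \ref{thm:iiffMart} (cf.\ the remark following Proposition \ref{prop:dropQ}), so distinct choices yield distinct $\mathbb{Q}^N_{t+2^{-N}}$ and hence distinct risk-neutral filtrations.

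The step I expect to be the real obstacle is the consistency of the recursion defining $\{q_t\}$: one must verify that, across the boundary between the full regime and the interval $[\alpha,\beta]$, the prescriptions of Propositions \ref{prop:fullQ} and \ref{prop:dropQ} do not clash on a node carrying positive $\mathbb{Q}$-mass, and that the normalisation $q_{t+2^{-N}}(\omega 0) + q_{t+2^{-N}}(\omega 1) = 1$ of Assumption \ref{asm:asmQ} (condition (3) of Proposition \ref{prop:Qcond}) is preserved so that every $\mathbb{Q}^N_t$ is genuinely a probability measure. Once this bookkeeping is secured, everything else is routine assembly of the preceding propositions.
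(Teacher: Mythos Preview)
Your proposal follows exactly the paper's route: assemble $\mathbb{Q}^N_t$ from the local $q_t$ of Propositions~\ref{prop:fullQ} and~\ref{prop:dropQ} via Assumption~\ref{asm:asmQ}, invoke Proposition~\ref{prop:CwllDef} for null-preservation, appeal to Theorem~\ref{thm:iiffMart} for the martingale property, and read off non-uniqueness from the free invisible-node values $q_{t+2^{-N}}(\omega 11)$. You are in fact more explicit than the paper on several points it leaves tacit---the tower property that upgrades the one-step martingale identity to arbitrary $s\le t$, the non-equivalence argument via $q_t(\omega 1)=0$ at a drop time, and the boundary-consistency bookkeeping you single out as the real obstacle, which the paper's own proof does not address either.
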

\begin{proof}
Substituting the 
$ q_t $
obtained by 
Propositions \ref{prop:fullQ}
and 
\ref{prop:dropQ}
into 
Assumption \ref{asm:asmQ},
we obtain the probability measure 
$ \mathbb{Q}^N_t $.
On the other hand, 
from Proposition \ref{prop:CwllDef},
the arrows 
$\mathbf{full}^N_{t,t+2^{-N}}$
and 
$\mathbf{drop}^N_{t,t+2^{-N}}$
are null-preserved under 
$ \mathbb{Q}^N_t $.
Therefore, 
we can say that 
$\mathcal{C}^N$
is a filtration. 
Moreover, 
$ \mathbb{Q}^N_t $
clearly satisfies the necessary and sufficient conditions of 
Theorem \ref{thm:iiffMart}
from the way it is constructed. 
Therefore, 
the filtration 
$\mathcal{C}^N$
is a risk-neutral filtration with respect to 
$ \mathbf{Drop}_{\alpha, \beta} $.
By the way, 
in Proposition \ref{prop:dropQ},
$
q_{t+2^{-N}}(\omega 11)
	\in [0, 1]
$
can take any value. 
Then 
$q_{t+2^{-N}}(\omega 10)$
can be computed by 
$
1 - q_{t+2^{-N}}(\omega 11)
$.
That is, 
the probability measure 
$\mathbb{Q}^N_{t+2^{-N}}$
is not uniquely determinable.

\end{proof}


\begin{figure}[h]
\center
\[
\xymatrix@C=75 pt@R=3 pt{
&&&
	\textred{\omega 111}
\\
&&
	\textred{\omega 11}
		\ar @{-} [ru]
		\ar @{-} [rd]
\\
&&&
	\textred{\omega 110}
\\
&
	\omega 1
		\ar @{-} [ruu]
		\ar @{-} [rdd]
\\
&&&
	\omega 101
\\
&&
	\omega 10
		\ar @{-} [ru]
		\ar @{-} [rd]
\\
&&&
	\omega 100
\\
	\omega
		\ar @{-} [ruuuu]
		\ar @{-} [rdddd]
\\
&&&
	\textred{\omega 011}
\\
&&
	\textred{\omega 01}
		\ar @{-} [ru]
		\ar @{-} [rd]
\\
&&&
	\textred{\omega 010}
\\
&
	\omega 0
		\ar @{-} [ruu]
		\ar @{-} [rdd]
\\
&&&
	\omega 001
\\
&&
	\omega 00
		\ar @{-} [ru]
		\ar @{-} [rd]
\\
&&&
	\omega 000
\\\\
	B^N_{t-2\cdot2^{-N}}
&
	B^N_{t-2^{-N}}
        \ar @{->}_{\mathbf{full}_{t-2\cdot2^{-N},t-2^{-N}}} [l]
&
    B^N_t
        \ar @{->}_{\mathbf{full}_{t-2^{-N},t}} [l]
&
    B^N_{t+2^{-N}}
        \ar @{->}_{\textblue{\mathbf{drop}^N_{t,t+2^{-N}}}} [l]
}
\]
\caption{Filtration $\mathbf{Drop}_{t-0.5, t+0.5}$}
\label{fig:dropFiltt}
\end{figure}
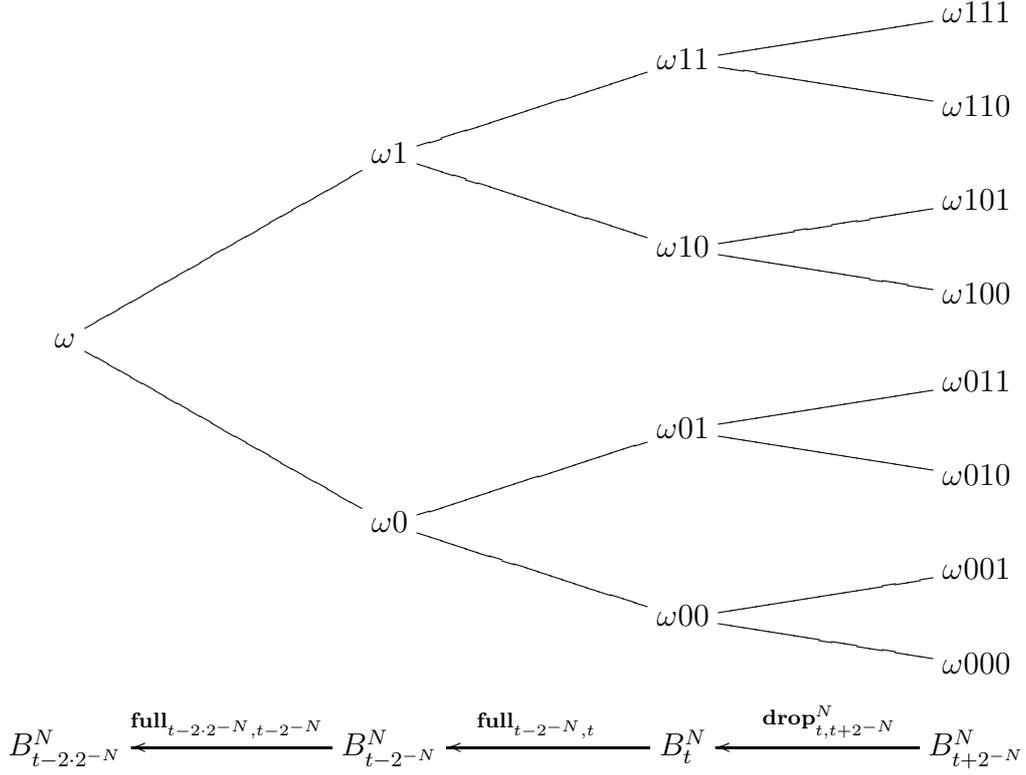

\subsection{Valuation}
\label{sec:valuation}

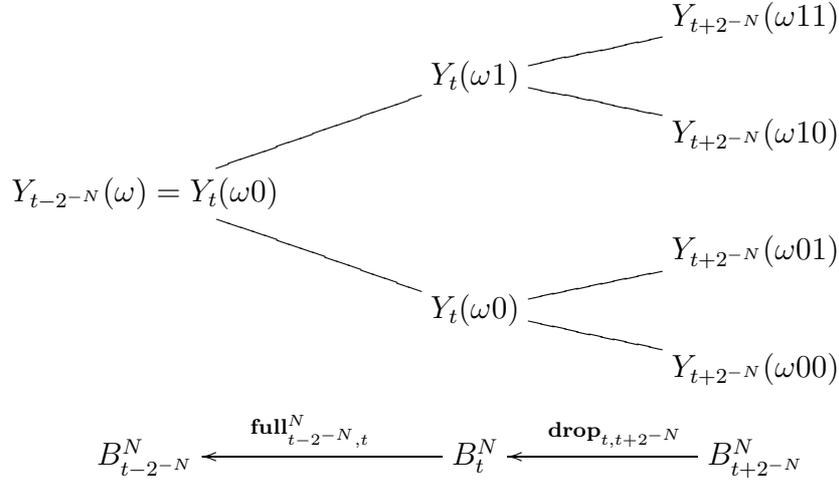
\begin{figure}[h]
\center
\begin{equation*}
\xymatrix@C=50 pt@R=3 pt{
&&
	\textred{Y_{t+2^{-N}}(\omega 11)}
\\
&
	\textred{Y_t(\omega 1)}
		\ar @{-} [ru]
		\ar @{-} [rd]
\\
&&
	\textred{Y_{t+2^{-N}}(\omega 10)}
\\
	\textblue{Y_{t-2^{-N}}(\omega) = Y_t(\omega 0)}
		\ar @{-} [ruu]
		\ar @{-} [rdd]
\\
&&
	Y_{t+2^{-N}}(\omega 01)
\\
&
	Y_t(\omega 0)
		\ar @{-} [ru]
		\ar @{-} [rd]
\\
&&
	Y_{t+2^{-N}}(\omega 00)
\\\\
	B^N_{t-2^{-N}}
&
    B^N_t
        \ar @{->}_{\mathbf{full}^N_{t-2^{-N},t}} [l]
&
    B^N_{t+2^{-N}}
        \ar @{->}_{\mathbf{drop}_{t,t+2^{-N}}} [l]
}
\end{equation*}
\caption{Valuation through $\mathbf{Drop}_{t,t}$}
\label{fig:dropVal}
\end{figure}

Let
$
\textred{\mathcal{C}^N}
 : \mathcal{T}^N \to \Prob
$
be a risk-neutral filtration
and
$
\textred{Y}
 : B^N_{T} \to \mathbb{R}
$
be a payoff
at time
$
\textred{T}
$.
Then,
the price
$
Y_t
$
of
$Y$
at time
$
\textred{t}
$
is
given by the equation
\begin{equation*}
\textred{Y_t}
	:=
E^{\mathcal{C}^N(\iota^N_{t,T})}(
	(b^N_T)^{-1} Y
) 
\end{equation*}
with the unique arrow
$
\textred{\iota^N_{t,T}} : T \to t
$.

That is to say, 
even those who have a dropped subjective filtration
can price Securities 
$Y$.
However, 
additional consideration is needed on how these prices affect the market equilibrium price.

For
$\omega \in B^N_{t-2 \cdot 2^{-N}}$,
you can see in 
Figure \ref{fig:dropVal}
that
at time
$t-2^{-N}$
the value of
$Y_t(\omega 1)$
is discarded and use only the value of
$Y_t(\omega 0)$
for computing 
$Y_{t-2^{-N}}(\omega)$.



\subsubsection{Replication Strategies}

Let us investigate the situation where
a given strategy
$
(\phi, \psi)
$
becomes a replication strategy of
the payoff 
$Y$
at time
$T$.

\begin{defn}{[Self-Financial Strategies]}
A
\newword{self-financial}
strategy
is 
a strategy
$
(\phi, \psi)
$
satisfying
\begin{equation}
S^N_t \phi_{t+2^{-N}}
	+
b^N_t \psi_{t+2^{-N}}
	=
V_t
\end{equation}
for
every
$
t \in (0, \infty)^N
$.

\end{defn}

For a self-financial strategy
$
(
\phi_t, \psi_t
)_{t \in (0, \infty)^N}
$,
we have:
\begin{align*}
V_{t+2^{-N}}
	=&
S^N_{t+2^{-N}} (\phi_{t+2^{-N}} \circ 
f^N_{t,t+2^{-N}})
	+
b^N_{t+2^{-N}} (\psi_{t+2^{-N}} \circ 
f^N_{t,t+2^{-N}})
	\\=&
(S^N_t \circ 
f^N_{t,t+2^{-N}})
(1 + 2^{-N} \mu + 2^{-\frac{N}{2}}\sigma \xi^N_{t+2^{-N}})
(\phi_{t+2^{-N}} \circ 
f^N_{t,t+2^{-N}})
	\\&
	+
b^N_{t+2^{-N}} 
(
(b^N_t)^{-1}
(V_t - S^N_t \phi_{t+2^{-N}})
\circ 
f^N_{t,t+2^{-N}})
	\\=&
(1 + 2^{-N} \mu + 2^{-\frac{N}{2}}\sigma \xi^N_{t+2^{-N}})
((S^N_t \phi_{t+2^{-N}}) \circ 
f^N_{t,t+2^{-N}})
	+
(1+ 2^{-N} r)
(
(V_t - S^N_t \phi_{t+2^{-N}})
\circ 
f^N_{t,t+2^{-N}})
	\\=&
(2^{-N} \mu - 2^{-N} r + 2^{-\frac{N}{2}}\sigma \xi^N_{t+2^{-N}})
((S^N_t \phi_{t+2^{-N}}) \circ 
f^N_{t,t+2^{-N}})
	\\&
	+
(1+ 2^{-N}r)
(
V^N_t 
\circ 
f^N_{t,t+2^{-N}}) .
\end{align*}
Therefore,
for
$\omega \in B^N_t$
and
$
d_{t+2^{-N}}
	\in
\{
0, 1
\}
$,
\begin{equation}
\label{eq:vnPreEq}
V_{t+2^{-N}}
(\omega d_{t+2^{-N}})
	=
(2^{-N}\mu - 2^{-N}r + 2^{-\frac{N}{2}}\sigma (2 d_{t+2^{-N}} - 1))
S^N_t(\omega_t) 
\phi_{t+V}(\omega_t)
	+
(1+2^{-N} r)
V_t(\omega_t)
\end{equation}
where
\begin{equation}
\omega_t
	:=
f^N_{t,t+2^{-N}}
(\omega d_{t+2^{-N}}) .
\end{equation}

Now let us assume that there exists a function
$
g_t 
	:
B^N_t
	\to
B^N_t
$
such that
$
f^N_{t,t+2^{-N}}
=
	g_t
\circ
	\mathbf{full}_{t,t+2^{-N}}
$ .
\begin{equation*}
\xymatrix@C=40 pt@R=20 pt{
&
	B^N_t
\\
	B^N_{t+2^{-N}}
		\ar @{->}^{
			f^N_{t,t+2^{-N}}
		} [ru]
		\ar @{->}_{
			\mathbf{full}_{t,t+2^{-N}}
		} [rd]
\\
&
	B^N_t
		\ar @{.>}_{\textred{g_t}} [uu]
}
\end{equation*}
Then
$
f^N_{t,t+2^{-N}}
(\omega d_{t+2^{-N}})
	=
g_t(\omega)
$
for every
$
\omega \in B^N_t
$
and
$
d_{t+2^{-N}}
	\in
\{ 0, 1 \}
$.
So the equation
(\ref{eq:vnPreEq})
becomes
\begin{equation}
\label{eq:vnPreMEq}
V_{t+2^{-N}}
(\omega d_{t+2^{-N}})
	=
(2^{-N}\mu - 2^{-N}r + 2^{-\frac{N}{2}}\sigma (2 d_{t+2^{-N}} - 1))
S^N_t(g_t(\omega)) 
\phi_{t+2^{-N}}(g_t(\omega))
	+
(1+2^{-N}r)
V_t(g_t(\omega)) .
\end{equation}
Hence, we have:
\begin{align}
\label{eq:phiN}
\phi_{t+2^{-N}}(g_t(\omega))
	&=
\frac{
	V_{t+2^{-N}}(\omega 1)
		-
	V_{t+2^{-N}}(\omega 0)
}{
	2^{1-\frac{N}{2}} \sigma S^N_t(g_t(\omega))
}
\\
\label{eq:VN}
V_t(g_t(\omega))
	&=
\frac{
	(2^{\frac{N}{2}}\sigma - \mu + r)
	V_{t+2^{-N}}(\omega 1)
		+
	(2^{\frac{N}{2}}\sigma - \mu + r)
	V_{t+2^{-N}}(\omega 0)
}{
	2^{1 + \frac{N}{2}}\sigma
	(1+ 2^{-N}r)
} .
\end{align}

Therefore,
we can determine the appropriate strategy
$
(\phi_{t+2^{-N}}, \psi_{t+2^{-N}})
$
on 
$
g_t(B^N_t)
	\subset
B^N_t
$
by (\ref{eq:phiN}).
We actually do not care the values of
$
(\phi_{t+2^{-N}}, \psi_{t+2^{-N}})
$
on
$
B^N_t \setminus g_t(B^N_t)
$.

For example,
in the case of
$
f^N_{t,t+2^{-N}}
	=
\mathbf{full}_{t,t+2^{-N}}
$,
the function
$
g_t
	:
B^N_t \to B^N_t
$
satisfies
\begin{equation}
g_t(\omega' d_{t})
	=
\omega' 0
\end{equation}
for all
$
\omega' \in
B^N_{t-2^{-N}}
$
and
$
d_{t}
	\in
\{0, 1\}
$.
Looking at
Figure \ref{fig:dropVal},
values in
the region
$
B^N_t \setminus g_t(B^N_t)
$
are not necessary for computing
$
Y_{t-2^{-N}}(\omega)
$.
Hence,
determining
the values of
$
(\phi_{t+2^{-N}}, \psi_{t+2^{-N}})
$
in
$
g_t(B^N_t)
$
is enough for making the practical valuation.


\subsection{Experienced Paths}
\label{sec:expPath}

In this subsection,
we introduce a concept of
experienced paths
that corresponds to a subjective recognition of a person's experience.

\begin{defn}
\label{defn:expPath}
Let
$
\mathcal{B}^N
	:
\mathcal{T}^N
	\to
\Prob
$
be a filtration
and
$t \in [0, \infty]^N$.

\begin{enumerate}
\item
Define
a function
$
\textred{
e^{\mathcal{B}^N}_t
}
	:
B^N_t
	\to
B^N_t
$
by
\begin{equation*}
\textred{
e^{\mathcal{B}^N}_t
}
(\omega)(s)
	:=
f^N_{s,t}
(\omega)(s)
\end{equation*}
for
$
\omega \in B^N_t
$,
$
s \in (0, t]^N
$
and
$
\textblue{
f^N_{s,t}
}
	:=
\mathcal{B}^N(\iota^N_{s,t})
$.

We call
$
\textblue{
e^{\mathcal{B}^N}_t(\omega)
}
$
an
\newword{experienced path}
of
$\omega$.

\item
$
\textred{
\tilde{\mathcal{B}}^N_t
}
	:= \{
e^{\mathcal{B}^N}_t(\omega)
	\mid
\omega \in B^N_t
\}
$.

\item
$
\textred{
\tilde{\mathcal{F}}^N_t
}
	:=
2^{
	\tilde{\mathcal{B}}^N_t
}
$.

\item
$
\textred{
\tilde{\mathbb{P}}^N_t
}
	:=
\mathbb{P}^N_t
	\circ
(e^{\mathcal{B}^N}_t)^{-1}
$.

\[
\xymatrix@C=30 pt@R=7 pt{
&
	B^N_t
		\ar @{->}^{
			e^{\mathcal{B}^N}_t
		} [r]
&
	\tilde{B}^N_t
\\
	[0,1]
&
	\mathcal{F}^N_t
		\ar @{->}_{
			\mathbb{P}^N_t
		} [l]
&
	\tilde{\mathcal{F}}^N_t
		\ar @{->}_{
			(e^{\mathcal{B}^N}_t)^{-1}
		} [l]
}
\]

\item
$
\textred{
\bar{\tilde{B}}^N_t
}
	:=
(
	\tilde{B}^N_t,
	\tilde{\mathcal{F}}^N_t,
	\tilde{\mathbb{P}}^N_t
)
$.

\item
For
$
s, t
	\in
[0, \infty)^N
	\;
(s \le t)
$,
$
\textred{
	\tilde{f}^N_{s,t}
}
	:
\bar{\tilde{B}}^N_t
	\to
\bar{\tilde{B}}^N_s
$
is a function defined by
\begin{equation*}
\textred{
	\tilde{f}^N_{s,t}
}
	:=
\textblue{\mathbf{full}}^N_{s,t}
	\mid_{
		\tilde{B}^N_t
	} .
\end{equation*}
\end{enumerate}

\end{defn}

\begin{prop}
A correspondence
$
\textred{
	\tilde{\mathcal{B}}^N
}
	:
\mathcal{T}^N
	\to
\Prob
$
defined by

\begin{equation*}
	\tilde{\mathcal{B}}^N
(t)
	:=
\bar{\tilde{B}}^N_t
	\quad \textrm{and} \quad
\textred{
	\tilde{\mathcal{B}}^N
}(
	\iota^N_{s,t}
)
	:=
\tilde{f}^N_{s,t}
\end{equation*}
is a functor, that is, 
a $\mathcal{T}^N$-filtration.

\end{prop}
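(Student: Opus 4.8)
The plan is to reduce the whole statement to a single commuting square relating the experienced-path maps $e^{\mathcal{B}^N}_t$ to the structure maps $f^N_{s,t} := \mathcal{B}^N(\iota^N_{s,t})$, and then read off each axiom of a $\mathcal{T}^N$-filtration (well-definedness of the spaces, measurability, null-preservation, functoriality) from it.

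First I would check that $\bar{\tilde{B}}^N_t$ is a genuine probability space: $\tilde{B}^N_t$ is a subset of the finite set $B^N_t$, $\tilde{\mathcal{F}}^N_t = 2^{\tilde{B}^N_t}$ is a $\sigma$-field, and $\tilde{\mathbb{P}}^N_t = \mathbb{P}^N_t \circ (e^{\mathcal{B}^N}_t)^{-1}$ is the push-forward of $\mathbb{P}^N_t$ along the (trivially measurable) surjection $e^{\mathcal{B}^N}_t : B^N_t \to \tilde{B}^N_t$, hence a probability measure with total mass $\mathbb{P}^N_t(B^N_t)=1$. Measurability of every $\tilde{f}^N_{s,t}$ is then automatic, since the target $\sigma$-field is the full powerset.

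Next — and this is the heart of the argument — I would prove that for $s \le t$ the a priori map $\mathbf{full}^N_{s,t} \circ e^{\mathcal{B}^N}_t : B^N_t \to B^N_s$ satisfies
\begin{equation*}
\mathbf{full}^N_{s,t} \circ e^{\mathcal{B}^N}_t
	=
e^{\mathcal{B}^N}_s \circ f^N_{s,t} .
\end{equation*}
Evaluating both sides at $\omega \in B^N_t$ and at $u \in (0,s]^N$: the left side is $e^{\mathcal{B}^N}_t(\omega)\mid_{(0,s]^N}(u) = f^N_{u,t}(\omega)(u)$ by Definition \ref{defn:expPath}, while the right side is $f^N_{u,s}(f^N_{s,t}(\omega))(u) = (f^N_{u,s} \circ f^N_{s,t})(\omega)(u)$, and these coincide because $\mathcal{B}^N$ is a functor and $\iota^N_{u,s} \circ \iota^N_{s,t} = \iota^N_{u,t}$, so $f^N_{u,s} \circ f^N_{s,t} = f^N_{u,t}$. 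Since the right side visibly lies in $\tilde{B}^N_s$ and $e^{\mathcal{B}^N}_t$ is surjective onto $\tilde{B}^N_t$, this shows that $\mathbf{full}^N_{s,t}$ carries $\tilde{B}^N_t$ into $\tilde{B}^N_s$, so $\tilde{f}^N_{s,t} : \bar{\tilde{B}}^N_t \to \bar{\tilde{B}}^N_s$ is well-defined and $\tilde{f}^N_{s,t} \circ e^{\mathcal{B}^N}_t = e^{\mathcal{B}^N}_s \circ f^N_{s,t}$. I expect this image condition to be the only genuinely substantive point; everything else is bookkeeping off the square.

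With the square in hand, null-preservation is immediate: for $A \subseteq \tilde{B}^N_s$ with $\tilde{\mathbb{P}}^N_s(A) = 0$, i.e. $\mathbb{P}^N_s((e^{\mathcal{B}^N}_s)^{-1}(A)) = 0$, we get
\begin{equation*}
\tilde{\mathbb{P}}^N_t((\tilde{f}^N_{s,t})^{-1}(A))
	=
\mathbb{P}^N_t\big((\tilde{f}^N_{s,t} \circ e^{\mathcal{B}^N}_t)^{-1}(A)\big)
	=
\mathbb{P}^N_t\big((f^N_{s,t})^{-1}((e^{\mathcal{B}^N}_s)^{-1}(A))\big)
	=
0 ,
\end{equation*}
the last equality because $f^N_{s,t}$ is null-preserving. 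Finally, functoriality: $\tilde{f}^N_{t,t} = \mathbf{full}^N_{t,t}\mid_{\tilde{B}^N_t} = Id_{\tilde{B}^N_t}$, and for $s \le t \le u$ the identity $\tilde{f}^N_{s,t} \circ \tilde{f}^N_{t,u} = \tilde{f}^N_{s,u}$ follows from Proposition \ref{prop:fulldrop}, namely $\mathbf{full}^N_{s,t} \circ \mathbf{full}^N_{t,u} = \mathbf{full}^N_{s,u}$, restricted to $\tilde{B}^N_u$ — the restriction being legitimate precisely because of the image condition just established (alternatively, compose the square with itself and cancel the surjection $e^{\mathcal{B}^N}_u$ on the right). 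This verifies all the axioms, so $\tilde{\mathcal{B}}^N$ is a functor $\mathcal{T}^N \to \Prob$, i.e. a $\mathcal{T}^N$-filtration.
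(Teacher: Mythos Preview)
Your proof is correct, and in fact the paper states this proposition without proof, so there is nothing to compare against directly. The argument you give is the natural one: check that each $\bar{\tilde{B}}^N_t$ is a probability space, that $\tilde{f}^N_{s,t}$ lands in $\tilde{B}^N_s$, that it is null-preserving, and that the composition law holds.

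One point worth noting: the commuting square you isolate as ``the heart of the argument'',
\[
\mathbf{full}^N_{s,t} \circ e^{\mathcal{B}^N}_t \;=\; e^{\mathcal{B}^N}_s \circ f^N_{s,t},
\]
is precisely the content of Theorem~\ref{thm:expPathNat}, which the paper states and proves \emph{after} this proposition (and whose proof is the same pointwise computation you carry out). So you have in effect folded Theorem~\ref{thm:expPathNat} into your proof of the proposition, and then used it to push null-preservation of $f^N_{s,t}$ forward to null-preservation of $\tilde f^N_{s,t}$ and to verify the image condition. That is a clean way to organize things: it makes explicit that the well-definedness of $\tilde{\mathcal{B}}^N$ and the naturality of $e^{\mathcal{B}^N}$ are really one computation, whereas the paper separates them and leaves the first as an exercise.
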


\begin{exmp}{[Experienced Paths for
$ \mathcal{B}^N := \mathbf{Drop}^N_{\textblue{\frac{3}{8}}, \textblue{\frac{5}{8}}} $
]}
Let
$ \mathcal{B}^N := \mathbf{Drop}^N_{\textblue{\frac{3}{8}}, \textblue{\frac{5}{8}}} $
and
$d_i \in \{ 0, 1 \}$
for
$i \in \mathbb{N}$.
Then, 
as seen in 
Figure \ref{fig:expPath},
we have
\begin{align*}
e^{\mathcal{B}^2}_1
(d_1 d_2 d_3 d_4)
	&=
d_1 0 d_3 d_4 ,
\\
e^{\mathcal{B}^3}_1
(d_1 d_2 d_3 d_4 d_5 d_6 d_7 d_8)
	&=
d_1 d_2 0 0 0 d_6 d_7 d_8 .
\end{align*}
\begin{figure*}[t]
\center
\[
\xymatrix@C=20 pt@R=10 pt{
	{*}
&
	B^2_0
		\ar @{}^-{\in} @<-6pt> [l]
&
	B^3_0
		\ar @{}_{ \mathrel{ \rotatebox[origin=c]{180}{$\in$} } } @<+6pt> [r]
&
	{*}
&
	0
\\
&
&
	B^3_{1/8}
		\ar @{->}_{\mathbf{full}} [u]
		\ar @{}_{ \mathrel{ \rotatebox[origin=c]{180}{$\in$} } } @<+6pt> [r]
&
	\textred{d_1}
&
	1/8
		\ar @{->} [u]
\\
	\textred{d_1}
&
	B^2_{1/4}
		\ar @{->}_{\mathbf{full}} [uu]
		\ar @{}^-{\in} @<-6pt> [l]
&
	B^3_{2/8}
		\ar @{->}_{\mathbf{full}} [u]
		\ar @{}_{ \mathrel{ \rotatebox[origin=c]{180}{$\in$} } } @<+6pt> [r]
&
	d_1 \textred{d_2}
&
	2/8
		\ar @{->} [u]
\\
&
&
	B^3_{3/8}
		\ar @{->}_{\mathbf{full}} [u]
		\ar @{}_{ \mathrel{ \rotatebox[origin=c]{180}{$\in$} } } @<+6pt> [r]
&
	d_1 d_2 \textred{0}
&
	3/8=\textblue{\alpha}
		\ar @{->} [u]
\\
	d_1 \textred{0}
&
	B^2_{2/4}
		\ar @{->}_{\mathbf{full}} [uu]
		\ar @{}^-{\in} @<-6pt> [l]
&
	B^3_{4/8}
		\ar @{->}_{\textblue{\mathbf{drop}}} [u]
		\ar @{}_{ \mathrel{ \rotatebox[origin=c]{180}{$\in$} } } @<+6pt> [r]
&
	d_1 d_2 d_3 \textred{0}
&
	4/8
		\ar @{->} [u]
\\
&
&
	B^3_{5/8}
		\ar @{->}_{\textblue{\mathbf{drop}}} [u]
		\ar @{}_{ \mathrel{ \rotatebox[origin=c]{180}{$\in$} } } @<+6pt> [r]
&
	d_1 d_2 d_3 d_4 \textred{0}
&
	5/8=\textblue{\beta}
		\ar @{->} [u]
\\
	d_1 d_2 \textred{d_3}
&
	B^2_{3/4}
		\ar @{->}_{\textblue{\mathbf{drop}}} [uu]
		\ar @{}^-{\in} @<-6pt> [l]
&
	B^3_{6/8}
		\ar @{->}_{\textblue{\mathbf{drop}}} [u]
		\ar @{}_{ \mathrel{ \rotatebox[origin=c]{180}{$\in$} } } @<+6pt> [r]
&
	d_1 d_2 d_3 d_4 d_5 \textred{d_6}
&
	6/8
		\ar @{->} [u]
\\
&
&
	B^3_{7/8}
		\ar @{->}_{\mathbf{full}} [u]
		\ar @{}_-{ \mathrel{ \rotatebox[origin=c]{180}{$\in$} } } @<+6pt> [r]
&
	d_1 d_2 d_3 d_4 d_5 d_6 \textred{d_7}
&
	7/8
		\ar @{->} [u]
\\
	d_1 d_2 d_3 \textred{d_4}
&
	B^2_{1}
		\ar @{->}_{\mathbf{full}} [uu]
		\ar @{}^-{\in} @<-6pt> [l]
&
	B^3_{1}
		\ar @{->}_{\mathbf{full}} [u]
		\ar @{}_-{ \mathrel{ \rotatebox[origin=c]{180}{$\in$} } } @<+6pt> [r]
&
	d_1 d_2 d_3 d_4 d_5 d_6 d_7 \textred{d_8}
&
	1=\textblue{t}
		\ar @{->} [u]
}
\]
\caption{Experienced Paths for
$ \mathcal{B}^N := \mathbf{Drop}^N_{\textblue{\frac{3}{8}}, \textblue{\frac{5}{8}}} 
$}
\label{fig:expPath}
\end{figure*}
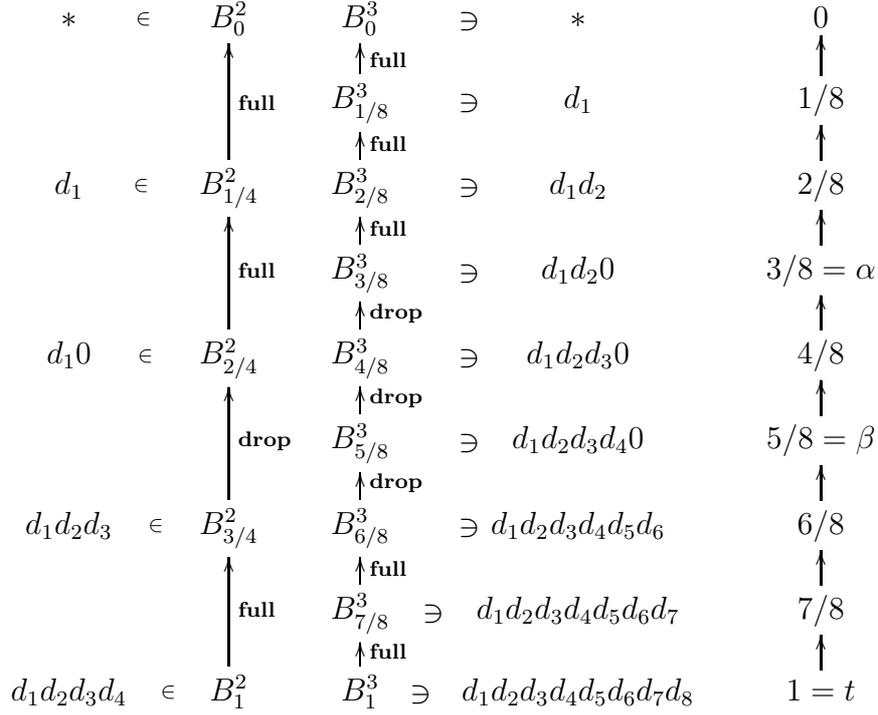

\end{exmp}

\begin{thm}
\label{thm:expPathNat}
The correspondence
$
\textred{
	e^{\mathcal{B}^N}
}
	:
\mathcal{B}^N
	\to
\tilde{\mathcal{B}}^N
$
is a \textblue{natural transformation}.
That is,
for
$
s, t
	\in
[0, \infty)^N
	\;
(s \le t)
$,
the following diagram commutes:
\[
\xymatrix@C=40 pt@R=10 pt{
	\mathcal{T}^N
&
	\mathcal{B}^N
		\ar @{->}^-{
			\textred{e^{\mathcal{B}^N}}
		} [r]
&
	\textred{
		\tilde{\mathcal{B}}^N
	}
\\
	s
&
	\bar{B}^N_s
		\ar @{->}^-{
			e^{\mathcal{B}^N}_s
		} [r]
&
	\bar{\tilde{B}}^N_s
\\\\
	t
		\ar @{->}^{\iota^N_{s,t}} [uu]
&
	\bar{B}^N_t
		\ar @{->}_-{
			e^{\mathcal{B}^N}_t
		} [r]
		\ar @{->}^{f^N_{s,t}} [uu]
&
	\bar{\tilde{B}}^N_t
		\ar @{->}_{
			\textred{
			\tilde{f}^N_{s,t}
			}
		} [uu]
}
\]

\end{thm}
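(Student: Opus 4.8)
The plan is to verify the two things a natural transformation must satisfy: that each component $e^{\mathcal{B}^N}_t$ is an arrow of $\Prob$, and that the naturality square attached to every arrow $\iota^N_{s,t}$ of $\mathcal{T}^N$ commutes. The first point is essentially free: by Definition \ref{defn:expPath} the probability measure on $\bar{\tilde{B}}^N_t$ is declared to be the push-forward $\tilde{\mathbb{P}}^N_t = \mathbb{P}^N_t \circ (e^{\mathcal{B}^N}_t)^{-1}$, so $e^{\mathcal{B}^N}_t \colon \bar{B}^N_t \to \bar{\tilde{B}}^N_t$ is measure-preserving, hence in particular null-preserving; this is already part of the content of the preceding proposition stating that $\tilde{\mathcal{B}}^N$ is a functor. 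So the real task is to establish, for each $s \le t$ in $[0,\infty)^N$, the identity
\[
e^{\mathcal{B}^N}_s \circ f^N_{s,t} \;=\; \tilde{f}^N_{s,t} \circ e^{\mathcal{B}^N}_t
\]
of functions with domain $B^N_t$.

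The core of the proof is a pointwise computation using the cocycle (functoriality) law $f^N_{u,s} \circ f^N_{s,t} = f^N_{u,t}$ carried by the filtration $\mathcal{B}^N$. Fix $\omega \in B^N_t$ and a time index $u \in (0,s]^N$. On the left-hand side, unwinding the definition of the experienced path twice gives
\[
\bigl(e^{\mathcal{B}^N}_s(f^N_{s,t}(\omega))\bigr)(u) = f^N_{u,s}\bigl(f^N_{s,t}(\omega)\bigr)(u) = \bigl(f^N_{u,s}\circ f^N_{s,t}\bigr)(\omega)(u) = f^N_{u,t}(\omega)(u),
\]
the last step being exactly functoriality of $\mathcal{B}^N$. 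On the right-hand side, $\tilde{f}^N_{s,t}$ is by definition the restriction of $\mathbf{full}^N_{s,t}$, i.e.\ restriction of a path to the domain $(0,s]^N$, so
\[
\bigl(\tilde{f}^N_{s,t}(e^{\mathcal{B}^N}_t(\omega))\bigr)(u) = \bigl(e^{\mathcal{B}^N}_t(\omega)\bigr)(u) = f^N_{u,t}(\omega)(u).
\]
Thus both composites send $\omega$ to one and the same element of $B^N_s$, namely the path $u \mapsto f^N_{u,t}(\omega)(u)$ on $(0,s]^N$, which proves commutativity.

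Finally I would remark that the common value just computed is by the left-hand description equal to $e^{\mathcal{B}^N}_s(f^N_{s,t}(\omega))$, so it lies in $\tilde{B}^N_s$; this is precisely what is needed to see that the right-hand composite $\tilde{f}^N_{s,t} \circ e^{\mathcal{B}^N}_t$ is well defined with codomain $\bar{\tilde{B}}^N_s$, closing the last gap. The whole argument is bookkeeping with restriction maps; the only genuine input is the cocycle identity for $\mathcal{B}^N$, and the only place demanding a little care is keeping domains and codomains straight — in particular confirming that $\tilde{f}^N_{s,t}$ actually lands in $\tilde{B}^N_s$ — but that too falls out of the same pointwise identity, so I do not expect a real obstacle.
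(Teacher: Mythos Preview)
Your proof is correct and follows essentially the same route as the paper: a pointwise check that for $\omega \in B^N_t$ and $u \in (0,s]^N$ both composites evaluate to $f^N_{u,t}(\omega)(u)$, with the cocycle identity $f^N_{u,s}\circ f^N_{s,t}=f^N_{u,t}$ as the sole substantive input. Your additional remarks on null-preservation of the components and on $\tilde{f}^N_{s,t}$ landing in $\tilde{B}^N_s$ go slightly beyond what the paper writes out, but they are in the same spirit and do not change the argument.
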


\begin{proof}
For
$
\omega \in B^N_t
$
and
$
u \in (0, s]^N
$,
\begin{align*}
\tilde{f}^N_{s,t}
(
e^{\mathcal{B}^N}_t (\omega)
)
(u)
	&=
(
	\mathbf{full}^N_{s,t}
		\mid_{\tilde{B}^N_t}
)
(
e^{\mathcal{B}^N}_t (\omega)
)
(u)
	\\&=
	\mathbf{full}^N_{s,t}
(
e^{\mathcal{B}^N}_t (\omega)
)
(u)
	\\&=
(
e^{\mathcal{B}^N}_t (\omega)
	\mid_{(0,s]^N}
)
(u)
	\\&=
e^{\mathcal{B}^N}_t (\omega)
(u)
	=
f^N_{u,t}(\omega)(u) .
\end{align*}
On the other hand,
\begin{equation*}
e^{\mathcal{B}^N}_s
(
	f^N_{s,t}(\omega)
)
(u)
	=
f^N_{u,s}(
	f^N_{s,t}(\omega)
)
(u)
	=
f^N_{u,t}(\omega)(u) .
\end{equation*}
\end{proof}

Here is an implication of 
Theorem \ref{thm:expPathNat}: 
The person who dropped her memory
believes that her memory is 
\textblue{perfect}
(\textblue{full}),
while others observe that
she lost her memory.

Lastly, we mention the fact that
in a case the given filtration is full, 
experienced paths coincide with objective paths.

\begin{prop}
\label{prop:fullExpIsObjective}
If 
$
\mathcal{B}^N
	=
\mathbf{Full}^N
$,
then
$
\tilde{\mathcal{B}}^N
	=
\mathcal{B}^N
$.
\end{prop}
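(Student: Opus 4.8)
The plan is to unwind the definitions and show that the experienced-path construction is the identity on all the data when $\mathcal{B}^N = \mathbf{Full}^N$. First I would observe that, by Definition~\ref{defn:expSurFil}, $\mathbf{Full}^N(\iota^N_{s,t}) = \mathbf{full}^N_{s,t}$ for all $s \le t$, and by Definition~\ref{defn:twoCandF} we have $\mathbf{full}^N_{s,t}(\omega) = \omega \mid_{(0,s]^N}$. Plugging this into the defining formula for $e^{\mathcal{B}^N}_t$ in Definition~\ref{defn:expPath}, for $\omega \in B^N_t$ and $s \in (0,t]^N$,
\begin{equation*}
e^{\mathbf{Full}^N}_t(\omega)(s) = \mathbf{full}^N_{s,t}(\omega)(s) = (\omega \mid_{(0,s]^N})(s) = \omega(s),
\end{equation*}
so $e^{\mathbf{Full}^N}_t = Id_{B^N_t}$ for every $t$.

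Once this is in hand the rest falls out componentwise. Since $e^{\mathbf{Full}^N}_t$ is the identity, its image $\tilde{B}^N_t = \{ e^{\mathbf{Full}^N}_t(\omega) \mid \omega \in B^N_t \}$ equals $B^N_t$; consequently $\tilde{\mathcal{F}}^N_t = 2^{\tilde{B}^N_t} = 2^{B^N_t} = \mathcal{F}^N_t$, and $\tilde{\mathbb{P}}^N_t = \mathbb{P}^N_t \circ (e^{\mathbf{Full}^N}_t)^{-1} = \mathbb{P}^N_t \circ Id = \mathbb{P}^N_t$. Hence $\bar{\tilde{B}}^N_t = \bar{B}^N_t$ as probability spaces, i.e.\ $\tilde{\mathcal{B}}^N(t) = \mathcal{B}^N(t)$ on objects. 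For the arrows, $\tilde{f}^N_{s,t} := \mathbf{full}^N_{s,t} \mid_{\tilde{B}^N_t} = \mathbf{full}^N_{s,t} \mid_{B^N_t} = \mathbf{full}^N_{s,t} = \mathcal{B}^N(\iota^N_{s,t})$, so the two functors agree on arrows as well. Therefore $\tilde{\mathcal{B}}^N = \mathcal{B}^N$.

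There is essentially no obstacle here; the statement is a consistency check confirming that the experienced-path functor is a genuine generalization and collapses to the original filtration in the classical case. The only point requiring a moment's care is the very first reduction $e^{\mathbf{Full}^N}_t = Id$, which hinges on reading off $\mathbf{full}^N_{s,t}(\omega)(s) = \omega(s)$ correctly from the restriction; everything downstream is then a routine substitution into the definitions of $\tilde{B}^N_t$, $\tilde{\mathcal{F}}^N_t$, $\tilde{\mathbb{P}}^N_t$ and $\tilde{f}^N_{s,t}$. (Implicitly one also uses Proposition~\ref{prop:fulldrop}(1) for the well-definedness of the construction, but that has already been established.)
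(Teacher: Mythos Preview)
Your proof is correct. The paper itself states this proposition without proof, treating it as an immediate consequence of the definitions, so there is nothing to compare against; your argument is precisely the routine unwinding the authors left to the reader, with the key observation $e^{\mathbf{Full}^N}_t = Id_{B^N_t}$ followed by componentwise verification on objects and arrows.
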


\section{Concluding Remarks}


In this paper, 
we proposed the concept of generalized filtration. 
It is an extended filtration that goes beyond the conventional framework of monotonically increasing information sequences 
and allows the development of information to not only increase, 
but also to decrease or be twisted.
It is an extended concept, 
just like the subjective probability measure attributed to an individual, 
of a subjective filtration as a history of personal information evolution. 
A natural interest is to see 
how far conventional theories of stochastic analysis and control can be developed under such generalized filtration.

In this paper, 
as an example of an application, 
in addition to conventional filtration (classical filtration) in a binomial asset price model, 
we introduce a dropped filtration with loss of memory for a certain period of time 
to see whether individuals with the latter as her subjective filtration can in any sense price securities. 
This resulted in the question of whether there is a risk-neutral filtration corresponding to this subjective filtration. 
We have shown the existence of such a filtration. 
However, the obtained risk-neutral filtration is not uniquely determined, 
unlike the classical risk-neutral probability measure observed in a complete market. 
This means that a market with such a generalized filtration is not complete 
(at least for individuals who have such a filtration as a subjective filtration). 
For other subjective filtrations not discussed in this paper, 
it is possible that there may be no risk-neutral probability measure. 
How equilibrium market prices are determined in such cases may be 
one of
important themes for future research.

Needless to say, 
the application of generalized filtrations shown in this paper is only one example, 
and many other applications are possible. 
As mentioned above, 
generalized filtrations can be used to 
develop conventional theories of stochastic control and stochastic differential equations. 
For example,
it can be used to transform a problem that is not time-consistent under classical filtration 
into a time-consistent problem
by twisting the filtration.
The theory of filtration enlargement used for credit risk calculation and insider trading analysis in finance 
may be able to be considered in the framework of generalized filtration \cite{AJ2017}.
Furthermore, 
in order to study the relationship between a filtration and related risk-neutral filtration, 
or filtrations defined on several different time domains, 
it is necessary to consider the transformation and convergence of filtrations in a space of filtrations.



\nocite{maclane1997}
\nocite{CK2012DM}

\def \EmbedBib  {1}

\if \EmbedBib  0
\bibliographystyle{apalike}
\bibliography{../../taka_e}
\else 

\fi 


\end{document}